\newtheorem{prop}{Proposition}
\newtheorem{thm}[prop]{Theorem}
\def\theequation{\thesection.\arabic{equation}}
\title{Toda lattice hierarchy and 
Goldstein-Petrich flows for plane curves}
\author{Kenji Kajiwara$^1$ and Saburo Kakei$^2$\\[2mm]
{}$^1$ 
{\small Institute of Mathematics for Industry, Kyushu University,}\\
{\small 744 Motooka, Fukuoka 819-0395, Japan.}\\[1mm]
{}$^2$ {\small Department of Mathematics, Rikkyo University,}\\
{\small 3-34-1 Nishi-ikebukuro, Toshima-ku, Tokyo 171-8501, Japan.}
}
\date{}
\begin{document}
\maketitle
\begin{abstract}
A relation between the Goldstein-Petrich hierarchy for plane curves
and the Toda lattice hierarchy is investigated. A representation formula 
for plane curves is given in terms of a special class of 
$\tau$-functions of the Toda lattice hierarchy.
A representation formula for discretized plane curves is also discussed.
\end{abstract}
\section{Introduction}
Intimate connection between integrable systems and 
differential geometry of curves and surfaces has been
important topic of intense research \cite{BobenkoSuris,RogersSchief}.
Goldstein and Petrich introduced a hierarchy of commuting flows 
for plane curves that is related to the modified Korteweg-de Vries 
(mKdV) hierarchy \cite{GP}. 
The second Goldstein-Petrich flow is defined by
the modified Korteweg-de Vries equation,
\begin{equation}
\frac{\partial\kappa}{\partial t}
=\frac{\partial^3 \kappa}{\partial x^3}+\frac{3}{2}\kappa^2
\frac{\partial\kappa}{\partial x}, 
\label{mKdV_kappa}
\end{equation}
where $\kappa=\kappa(x,t)$ denotes the curvature and $x$ is the
arc-length. This result has been
extended and investigated from various viewpoints
\cite{ChouQu,DoliwaSantini,FK,IKMO1,IKMO2,Ivey,LangerPerline,Musso,NSW}.  
In \cite{IKMO1,IKMO2}, a
representation formula for curve motion in terms of the $\tau$ function with
respect to the second Goldstein-Petrich flow has been presented by means
of the Hirota bilinear formulation and determinant expression of
solutions. The aim of this article is to
generalize the results in \cite{IKMO1,IKMO2} to the whole hierarchy.  
We will show how the Goldstein-Petrich hierarchy is embedded in the 
Toda lattice hierarchy\cite{TakasakiBook,UenoTakasaki}.
We remark that the semi-discrete case, discussed in \cite{IKMO2}, 
is not considered in this paper.

An advantage of infinite hierarchical formulation is its relation 
to integrable discretization. Miwa showed that Hirota's 
discrete Toda equation \cite{Hirota_DAGTE} can be obtained by 
applying a change of coordinate to the KP hierarchy
\cite{JimboMiwa,Miwa,TakasakiBook}.
Using a generalization of Miwa's approach, we will show that 
Matsuura's discretized curve motion \cite{Matsuura} can be 
obtained also from the Toda lattice hierarchy.
Another merit of the KP theoretic formulation is 
Lie algebraic aspect of the hierarchy \cite{JimboMiwa,MJD}. 
We will discuss a relationship between 
the Goldstein-Petrich hierarchy and a real form of the affine Lie algebra 
$\widehat{\mathfrak{sl}}(2,\mathbb{C})$.

\section{Goldstein-Petrich flows for Euclidean plane curves}
We assume that $\mathbf{r}(x)={}^t\!\left(X(x), Y(x)\right)$
is a curve in Euclidean plane $\mathbb{R}^2$, parameterized by
the arc-length $x$. Define the tangent vector $\hat{\mathbf{t}}$ and 
the unit normal $\hat{\mathbf{n}}$ by 
\begin{equation}
\hat{\mathbf{t}}=\mathbf{r}_x, \quad
\hat{\mathbf{n}}=\begin{bmatrix}0 & -1\\ 1 & 0\end{bmatrix}\hat{\mathbf{t}}.
\label{def:t,n}
\end{equation}
Here the subscript $x$ indicates differentiation.
The Frenet equation for $\mathbf{r}$ is given by
\begin{equation}
\hat{\mathbf{t}}_x = \kappa \hat{\mathbf{n}},\quad
\hat{\mathbf{n}}_x = -\kappa \hat{\mathbf{t}},
\label{FrenetEq}
\end{equation}
where $\kappa$ is the curvature of the curve $\mathbf{r}$.
Goldstein and Petrich \cite{GP} considered 
dynamics of a plane curve described by the 
equation of the form
\begin{equation}
\frac{\partial\mathbf{r}}{\partial t_n}
 = f^{(n)} \hat{\mathbf{n}}+g^{(n)} \hat{\mathbf{t}}. 
\label{eq:time-evolutions}
\end{equation}
The coefficients $f^{(n)}=f^{(n)}(x,t)$, $g^{(n)}=g^{(n)}(x,t)$ 
($t=(t_1,t_2,t_3,\ldots)$) are differential polynomials in $\kappa$.
We remark that our choice of signature in \eqref{FrenetEq} is 
different from that of \cite{GP}. 
Following the discussion in \cite{GP}, 
we choose $f^{(n)}(x,t)$, $g^{(n)}(x,t)$ as 
\begin{equation}
\begin{aligned}
& f^{(1)} = 0, \quad g^{(1)}=1,\quad 
  f^{(2)} = \kappa_x, \quad g^{(2)}=\kappa^2/2,\\
& g^{(n)}_x = \kappa f^{(n)},\quad
f^{(n+1)}=\left(f^{(n)}_x+\kappa g^{(n)}\right)_x.
\end{aligned}
\label{def:coeff_f_g}
\end{equation}
We call as Goldstein-Petrich hierarchy the equations 
defined by \eqref{def:t,n}, \eqref{FrenetEq},
\eqref{eq:time-evolutions} and \eqref{def:coeff_f_g}.

Applying the condition \eqref{def:coeff_f_g} to
\eqref{eq:time-evolutions}, we obtain
\begin{equation}
\frac{\partial\hat{\mathbf{t}}}{\partial t_n}
= \left(f^{(n)}_x+\kappa g^{(n)}\right)\hat{\mathbf{n}},\quad
\frac{\partial\hat{\mathbf{n}}}{\partial t_n}
= -\left(f^{(n)}_x+\kappa g^{(n)}\right)\hat{\mathbf{t}}.
\label{eq:dt/dtn,dn/dtn}
\end{equation}
The compatibility condition for 
\eqref{FrenetEq} and \eqref{eq:dt/dtn,dn/dtn} is reduced to 
\begin{equation}
\frac{\partial\kappa}{\partial t_n}
=\left(f^{(n)}_x+\kappa g^{(n)}\right)_x=f^{(n+1)}.
\label{eq:dkappa/dtn}
\end{equation}
The case $n=2$ of \eqref{eq:dkappa/dtn} gives 
the mKdV equation \eqref{mKdV_kappa}.
One finds that
\begin{equation}
f^{(n)}= \Omega f^{(n-1)}, \quad
\Omega = 
\partial_x^2+\kappa^2+\kappa_x\partial_x^{-1}\circ \kappa.
\label{def:recursion_operator_omega}
\end{equation}
We remark that the operator $\Omega$ is the recursion operator for the
modified KdV hierarchy \cite{ChernPeng}. 

We now introduce complex coordinate via a map 
$\rho:\mathbb{R}^2\to\mathbb{C}$ given by
\begin{equation}
\rho(X,Y) = X+\sqrt{-1}\,Y.
\end{equation}
and define $Z$, $T$, $N$ as
\begin{equation}
Z= \rho(\mathbf{r}), \quad
T= \rho(\hat{\mathbf{t}}), \quad
N= \rho(\hat{\mathbf{n}}) = \sqrt{-1}\, T.
\end{equation}
Since $|\hat{\mathbf{t}}|=|\hat{\mathbf{n}}|=1$, 
the complex variables $T$ and $N$ satisfy $|T|=|N|=1$.
The equations 
\eqref{def:t,n}, \eqref{FrenetEq}, \eqref{eq:time-evolutions} 
are rewritten as
\begin{equation}
T=Z_x,\quad T_x = \sqrt{-1}\kappa T,\quad
\frac{\partial Z}{\partial t_n}= \left(g^{(n)}+\sqrt{-1}\,f^{(n)}\right)T.
\label{eqs:GPhier_cplx_coord}
\end{equation}

\section{Toda lattice hierarchy}
In this section, we briefly review the theory of Toda lattice hierarchy 
using the language of difference operators
\cite{TakasakiBook,UenoTakasaki} (See also \cite{Kaji,Takebe1,Takebe2}). 
We denote as $e^{\partial_s}$ the shift operator with respect to $s$: 
$e^{\partial_s}f(s)=f(s+1)$.
For a difference operator 
$A(s)=\sum_{-\infty<j<+\infty}a_j(s)e^{j\partial_s}$, 
we define the non-negative and negative part of $A(s)$ as 
\begin{equation}
\left(A(s)\right)_{\geq 0}=
\sum_{0\leq j<+\infty}a_j(s)e^{j\partial_s},
\quad
\left(A(s)\right)_{<0}=
\sum_{-\infty<j<0}a_j(s)e^{j\partial_s}.
\end{equation}
Let $L^{(\infty)}(s)$, $L^{(0)}(s)$ be difference operators of the form
\begin{equation}
L^{(\infty)}(s)
= e^{\partial_s} + \sum_{-\infty<j\leq 0} b_j(s)e^{j\partial_s},
\quad
L^{(0)}(s)= \sum_{-1\leq j< +\infty} c_j(s)e^{j\partial_s},
\label{def:L}
\end{equation}
where we assume $c_{-1}(s)\neq 0$ for any $s$. 
We introduce two sets of infinitely many variables 
$x=(x_1,x_2,\ldots)$, $y=(y_1,y_2,\ldots)$ and 
define the weight of the variables as 
\begin{equation}
\mathrm{weight}(x_n)=n, \quad \mathrm{weight}(y_n)=-n \quad (n=1,2,\ldots).
\end{equation}
Each coefficient of $L^{(\infty)}(s)$, $L^{(0)}(s)$ is a function
of $x,y$, i.e. $b_j(s)=b_j(s;x,y)$, $c_j(s)=c_j(s;x,y)$.
The Toda lattice hierarchy is defined as the following set of 
differential equations of Lax-type:
\begin{align}
&\begin{aligned}
&\frac{\partial L^{(\infty)}(s)}{\partial x_n}
=\left[B_n(s),\,L^{(\infty)}(s)\right],\quad
\frac{\partial L^{(0)}(s)}{\partial x_n}
=\left[B_n(s),\,L^{(0)}(s)\right],\\
&B_n(s)=\left(L^{(\infty)}(s)^n\right)_{\geq 0}
\quad (n=1,2,3,\dots),
\end{aligned}
\label{def:TLhierarchy_x}\\
&\begin{aligned}
&\frac{\partial L^{(\infty)}(s)}{\partial y_n}
=\left[C_n(s),\,L^{(\infty)}(s)\right],\quad
\frac{\partial L^{(0)}(s)}{\partial y_n}
=\left[C_n(s),\,L^{(0)}(s)\right],\\
&C_n(s)=\left(L^{(0)}(s)^n\right)_{<0}
\quad (n=1,2,3,\dots).
\end{aligned}
\label{def:TLhierarchy_y}
\end{align}
\begin{prop}[\cite{UenoTakasaki}, Proposition 1.4]
\label{Prop:SatoEqs}
Let $L^{(\infty)}$, $L^{(0)}$ be difference operators of the form \eqref{def:L} 
and satisfy the differential equations \eqref{def:TLhierarchy_x}, 
\eqref{def:TLhierarchy_y}.
Then there exist difference operators 
$\hat{W}^{(\infty)}(s)$, $\hat{W}^{(0)}(s)$ of the form,
\begin{equation}
\begin{aligned}
\hat{W}^{(\infty)}(s)&=1+\sum_{j=1}^{\infty}
\hat{w}^{(\infty)}_j(s)e^{-j\partial_s},\\
\hat{W}^{(0)}(s)&=\sum_{j=0}^{\infty}\hat{w}_j^{(0)}(s)e^{j\partial_s}
\quad (\hat{w}_0^{(0)}(s)\neq 0),
\end{aligned}
\label{def:WinftyW0}
\end{equation}
satisfying the following equations:
\begin{equation}
\begin{aligned}
L^{(\infty)}(s)&=
\hat{W}^{(\infty)}(s)e^{\partial_s}\hat{W}^{(\infty)}(s)^{-1},
\\
L^{(0)}(s)&=\hat{W}^{(0)}(s)e^{-\partial_s}\hat{W}^{(0)}(s)^{-1},
\end{aligned}
\label{rel:LtoW}
\end{equation}
\begin{equation}
\begin{aligned}
\frac{\partial\hat{W}^{(\infty)}(s)}{\partial x_n} &=
B_n(s)\hat{W}^{(\infty)}(s)-\hat{W}^{(\infty)}(s)e^{n\partial_s},
\\
\frac{\partial\hat{W}^{(\infty)}(s)}{\partial y_n} &=
C_n(s)\hat{W}^{(\infty)}(s),
\\
\frac{\partial\hat{W}^{(0)}(s)}{\partial x_n} &=
B_n(s)\hat{W}^{(0)}(s),
\\
\frac{\partial\hat{W}^{(0)}(s)}{\partial y_n} &=
C_n(s)\hat{W}^{(0)}(s)-\hat{W}^{(0)}(s)e^{-n\partial_s}.
\end{aligned}
\label{SatoEqs}
\end{equation}
\end{prop}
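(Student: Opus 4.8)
The plan is to manufacture the two wave operators in three stages: first solve the purely algebraic dressing relations \eqref{rel:LtoW} at a single fixed base point of the $(x,y)$-space; then propagate the operators in every time variable by declaring the Sato equations \eqref{SatoEqs} to be evolution equations; and finally verify that the dressing relations are preserved along the flow, so that the operators so produced satisfy all of the asserted properties.

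First I would establish existence of $\hat W^{(\infty)}$, $\hat W^{(0)}$ at the base point. Writing $\hat W^{(\infty)}=1+\sum_{j\ge1}\hat w^{(\infty)}_j e^{-j\partial_s}$ and substituting into $L^{(\infty)}\hat W^{(\infty)}=\hat W^{(\infty)}e^{\partial_s}$, comparison of the coefficients of $e^{m\partial_s}$ in decreasing order of $m$ yields a triangular recursion in which each new coefficient $\hat w^{(\infty)}_j(s)$ is determined from the previous ones by a first-order linear difference equation of the shape $\hat w^{(\infty)}_j(s)-\hat w^{(\infty)}_j(s+1)=(\text{known})$, which is solvable. The same scheme applied to $L^{(0)}\hat W^{(0)}=\hat W^{(0)}e^{-\partial_s}$ determines $\hat W^{(0)}=\sum_{j\ge0}\hat w^{(0)}_j e^{j\partial_s}$; here the lowest-order matching reads $c_{-1}(s)\,\hat w^{(0)}_0(s-1)=\hat w^{(0)}_0(s)$, so the hypothesis $c_{-1}(s)\neq0$ is precisely what produces a nonvanishing leading coefficient $\hat w^{(0)}_0$. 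These operators are unique only up to right multiplication by a shift-commuting operator of the form $1+\sum_{j\ge1}\gamma_j e^{-j\partial_s}$ with $s$-independent $\gamma_j$; since the statement only asserts existence, I may fix any one such choice at the base point.

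Next I would regard \eqref{SatoEqs} as evolution equations for $\hat W^{(\infty)}$, $\hat W^{(0)}$ with these base-point operators as initial data, the coefficients $B_n$, $C_n$ being the given functions of $x,y$. The crucial point is that this overdetermined system is compatible: computing $\partial_{x_m}\partial_{x_n}\hat W^{(\infty)}-\partial_{x_n}\partial_{x_m}\hat W^{(\infty)}$ directly from \eqref{SatoEqs}, the terms carrying $e^{n\partial_s}$ and $e^{m\partial_s}$ cancel identically and one is left with $(\partial_{x_m}B_n-\partial_{x_n}B_m+[B_n,B_m])\hat W^{(\infty)}$, and the mixed and $y$-flows reduce the same way. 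Thus compatibility is equivalent to the zero-curvature (Zakharov–Shabat) relations, e.g. $\partial_{x_m}B_n-\partial_{x_n}B_m+[B_n,B_m]=0$ together with the mixed relation $\partial_{y_n}B_m-\partial_{x_m}C_n+[B_m,C_n]=0$, which I would in turn deduce from the Lax equations \eqref{def:TLhierarchy_x}, \eqref{def:TLhierarchy_y} by projecting the identities $\partial_{x_m}L^n=[B_m,L^n]$ onto their non-negative parts. I expect this step to be the main obstacle, in particular the mixed $x$–$y$ relations, since it is exactly here that the two a priori independent Lax operators $L^{(\infty)}$ and $L^{(0)}$ must be shown to assemble into a single consistent connection.

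Finally, granting compatibility the Sato system admits a solution $\hat W^{(\infty)}$, $\hat W^{(0)}$, and it remains to check \eqref{rel:LtoW}. Setting $\tilde L:=\hat W^{(\infty)}e^{\partial_s}(\hat W^{(\infty)})^{-1}$ and using the first Sato equation, one gets $(\partial_{x_n}\hat W^{(\infty)})(\hat W^{(\infty)})^{-1}=B_n-\tilde L^{\,n}$, hence $\partial_{x_n}\tilde L=[B_n-\tilde L^{\,n},\tilde L]=[B_n,\tilde L]$; since $\tilde L$ and $L^{(\infty)}$ then obey the same linear equation $\partial_{x_n}X=[B_n,X]$ with the same value $\tilde L=L^{(\infty)}$ at the base point, uniqueness forces $\tilde L=L^{(\infty)}$ throughout, and likewise $\hat W^{(0)}e^{-\partial_s}(\hat W^{(0)})^{-1}=L^{(0)}$. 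Once the dressing relations are known, the right-hand sides of \eqref{SatoEqs} collapse to expressions such as $-(L^{(\infty)n})_{<0}\hat W^{(\infty)}$ of strictly negative order, which shows that the normalizations in \eqref{def:WinftyW0}, namely leading coefficient $1$ and nonvanishing $\hat w^{(0)}_0$, are preserved under all the flows, completing the argument.
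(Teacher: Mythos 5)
The paper itself offers no proof of this proposition: it is imported verbatim from Ueno--Takasaki \cite{UenoTakasaki} (their Proposition 1.4), so there is no internal argument to compare yours against; what follows judges your reconstruction on its own merits. Your three-stage plan is correct and is essentially the standard Sato-theoretic construction, close in spirit to the cited source (which likewise passes through the Zakharov--Shabat equations and a dressing argument, though organized as ``dress at all times, then correct by a right factor with $s$-independent coefficients'' rather than your ``dress at a base point, then evolve''). Stage 1 is right: the recursions are triangular, each step being a solvable first-order difference equation in $s$, and the hypothesis $c_{-1}(s)\neq 0$ in \eqref{def:L} is exactly what makes $\hat w^{(0)}_0$ nonvanishing. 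Stage 2 is right as well, with one small addendum: besides $\partial_{x_m}B_n-\partial_{x_n}B_m+[B_n,B_m]=0$ and the mixed relation $\partial_{y_n}B_m-\partial_{x_m}C_n+[B_m,C_n]=0$, you also need the $y$--$y$ relation $\partial_{y_m}C_n-\partial_{y_n}C_m+[C_n,C_m]=0$, and all three do follow from \eqref{def:TLhierarchy_x}, \eqref{def:TLhierarchy_y} by the projection argument you indicate (the key point being that commutators of two strictly negative-order, or two non-negative-order, operators keep that sign). The only place that deserves more care is the interplay between stages 2 and 3: when you integrate \eqref{SatoEqs} as evolution equations, the normalized form \eqref{def:WinftyW0} is not manifestly preserved, since $B_n\hat W^{(\infty)}-\hat W^{(\infty)}e^{n\partial_s}$ has a priori positive-order terms, and your stage-3 fix invokes $(\hat W^{(\infty)})^{-1}$, whose existence is clear only for suitably normalized operators. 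This is not actually circular in the formal-power-series category --- the degree-zero Taylor coefficient in the times is invertible, so the inverse exists degree by degree --- but it is cleaner to avoid inverses: set $M:=L^{(\infty)}\hat W^{(\infty)}-\hat W^{(\infty)}e^{\partial_s}$ and verify directly from the Lax and Sato equations that $\partial_{x_n}M=B_nM-Me^{n\partial_s}$ and $\partial_{y_n}M=C_nM$; since $M$ vanishes at the base point, uniqueness of the formal solution gives $M\equiv 0$, after which your closing observation (the right-hand sides collapse to $-\bigl(L^{(\infty)n}\bigr)_{<0}\hat W^{(\infty)}$, and similarly for $\hat W^{(0)}$) preserves the normalizations by induction on Taylor degree. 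With that rearrangement your outline is a complete and correct proof.
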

\begin{prop}[\cite{UenoTakasaki}, (1.2.18)] 
The difference operators 
$\hat{W}^{(\infty)}(s)$, $\hat{W}^{(0)}(s)$ in Proposition
\ref{Prop:SatoEqs} satisfy 
\begin{equation}
\begin{aligned}
&\hat{W}^{(\infty)}(s;x',y')\exp\!\left[
\sum_{n=1}^{\infty}(x'_n-x_n)e^{n\partial_s}
\right]
\hat{W}^{(\infty)}(s;x,y)^{-1}\\
&\qquad =
\hat{W}^{(0)}(s;x',y')\exp\!\left[
\sum_{n=1}^{\infty}(y'_n-y_n)e^{-n\partial_s}
\right]
\hat{W}^{(0)}(s;x,y)^{-1}
\end{aligned}
\label{WexpW=WexpW}
\end{equation}
for any $x$, $x'$, $y$, $y'$ and any integer $s$.
\end{prop}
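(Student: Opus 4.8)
The plan is to reduce the identity \eqref{WexpW=WexpW} to the statement that a single difference operator, built from $\hat{W}^{(\infty)}$, $\hat{W}^{(0)}$ and the formal exponential ``time-evolution'' factors, is independent of all the flow variables. To this end I introduce the full dressing operators
\begin{equation*}
\Phi(s;x,y)=\hat{W}^{(\infty)}(s;x,y)\,\exp\!\Bigl[\sum_{n=1}^{\infty}x_n e^{n\partial_s}\Bigr],\qquad
\Psi(s;x,y)=\hat{W}^{(0)}(s;x,y)\,\exp\!\Bigl[\sum_{n=1}^{\infty}y_n e^{-n\partial_s}\Bigr].
\end{equation*}
Because each exponential factor is a product of commuting shifts, $\exp[\sum_n x'_n e^{n\partial_s}]\exp[-\sum_n x_n e^{n\partial_s}]=\exp[\sum_n (x'_n-x_n)e^{n\partial_s}]$ and similarly for $y$, so that the left- and right-hand sides of \eqref{WexpW=WexpW} are exactly $\Phi(s;x',y')\Phi(s;x,y)^{-1}$ and $\Psi(s;x',y')\Psi(s;x,y)^{-1}$. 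Thus \eqref{WexpW=WexpW} is equivalent to
\begin{equation*}
\Phi(s;x',y')\,\Phi(s;x,y)^{-1}=\Psi(s;x',y')\,\Psi(s;x,y)^{-1},
\end{equation*}
which in turn is equivalent to the assertion that $\Psi(s;x,y)^{-1}\Phi(s;x,y)$ does not depend on $(x,y)$.

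Next I would compute the flow derivatives of $\Phi$ and $\Psi$ using the Sato equations \eqref{SatoEqs}. Since $\partial_{x_n}\exp[\sum_m x_m e^{m\partial_s}]=e^{n\partial_s}\exp[\sum_m x_m e^{m\partial_s}]$, the term $-\hat{W}^{(\infty)}e^{n\partial_s}$ appearing in $\partial_{x_n}\hat{W}^{(\infty)}$ cancels against the derivative of the exponential factor, leaving $\partial_{x_n}\Phi=B_n\Phi$; the exponential factor is independent of $y_n$, so the second Sato equation gives $\partial_{y_n}\Phi=C_n\Phi$. The symmetric computation for $\Psi$ --- where now the term $-\hat{W}^{(0)}e^{-n\partial_s}$ cancels against $\partial_{y_n}$ of its exponential factor --- yields $\partial_{x_n}\Psi=B_n\Psi$ and $\partial_{y_n}\Psi=C_n\Psi$. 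Hence $\Phi$ and $\Psi$ satisfy one and the same linear system $\partial_{x_n}(\,\cdot\,)=B_n(\,\cdot\,)$, $\partial_{y_n}(\,\cdot\,)=C_n(\,\cdot\,)$.

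Finally, writing $W_0:=\Psi^{-1}\Phi$ and using $\partial\Psi^{-1}=-\Psi^{-1}(\partial\Psi)\Psi^{-1}$, I obtain $\partial_{x_n}W_0=-\Psi^{-1}B_n\Phi+\Psi^{-1}B_n\Phi=0$ and likewise $\partial_{y_n}W_0=0$, so $W_0$ is independent of $(x,y)$. Then $\Phi=\Psi W_0$ with $W_0$ constant gives $\Phi(x',y')\Phi(x,y)^{-1}=\Psi(x',y')W_0W_0^{-1}\Psi(x,y)^{-1}=\Psi(x',y')\Psi(x,y)^{-1}$, which is \eqref{WexpW=WexpW}. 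I expect the only genuine difficulty to be the formal bookkeeping rather than the computation: one must fix a suitable completion of the ring of difference operators in which the exponentials $\exp[\sum_n x_n e^{\pm n\partial_s}]$, their inverses and the products $\Psi^{-1}\Phi$ are well defined, and in which term-by-term differentiation and the implication ``vanishing derivatives $\Rightarrow$ constant'' are legitimate. Here the invertibility of $\Psi$ follows from $\hat{w}_0^{(0)}(s)\neq 0$ together with the invertibility of its exponential factor. Once this framework is in place, the argument above is purely computational.
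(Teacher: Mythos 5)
Your proposal is correct, and it coincides with the proof in the cited source: the paper itself states this proposition without proof (quoting Ueno--Takasaki, (1.2.18)), and the standard argument there is exactly yours --- pass to the full wave operators $\Phi=\hat{W}^{(\infty)}\exp\bigl[\sum_n x_n e^{n\partial_s}\bigr]$ and $\Psi=\hat{W}^{(0)}\exp\bigl[\sum_n y_n e^{-n\partial_s}\bigr]$, use the Sato equations \eqref{SatoEqs} to show both satisfy the same linear system $\partial_{x_n}(\cdot)=B_n(\cdot)$, $\partial_{y_n}(\cdot)=C_n(\cdot)$, and conclude that $\Psi^{-1}\Phi$ is independent of $(x,y)$. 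The only caveat is the formal-completion bookkeeping (coefficients as formal power series in $(x,y)$ so that the infinite sums in products like $\Psi^{-1}\Phi$ converge coefficientwise), which you correctly identify and which is handled the same way in the reference.
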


Define $\hat{w}^{(\infty)*}_j(s;x,y)$, $\hat{w}^{(0)*}_j(s;x,y)$ 
by expanding 
$\hat{W}^{(\infty)}(s;x,y)^{-1}$, $\hat{W}^{(0)}(s;x,y)^{-1}$ 
with respect to $e^{\partial_s}$:
\begin{equation}
\begin{aligned}
\hat{W}^{(\infty)}_j(s;x,y)^{-1}
&=\sum_{j=0}^{\infty}e^{-j\partial_s}\hat{w}^{(\infty)*}_j(s+1;x,y),
\\
\hat{W}^{(0)}(s;x,y)^{-1}
&=\sum_{j=0}^{\infty}e^{j\partial_s}\hat{w}^{(0)*}_j(s+1;x,y).
\end{aligned}
\label{def:WinftyW0_inverse}
\end{equation}
{}From \eqref{def:WinftyW0}, \eqref{rel:LtoW} and 
\eqref{def:WinftyW0_inverse}, we obtain
\begin{align}
b_0(s)&= \hat{w}^{(\infty)}_{1}(s)+\hat{w}^{(\infty)*}_{1}(s+1)
=\hat{w}^{(\infty)}_{1}(s)-\hat{w}^{(\infty)}_{1}(s+1),
\nonumber\\
b_{-n}(s)&= \hat{w}^{(\infty)}_{n+1}(s)+
\hat{w}^{(\infty)*}_{n+1}(s+1-n)+
\sum_{j=1}^{n}\hat{w}^{(\infty)}_{j}(s)
\hat{w}^{(\infty)*}_{n+1-j}(s+1-n) \quad (n\geq 1),
\nonumber\\
c_n(s)&= \sum_{j=0}^{n+1}
\hat{w}^{(0)}_j(s)\hat{w}^{(0)*}_{n-j+1}(s+n+1)
\quad (n\geq -1).
\label{rel:cn_to_w}
\end{align}

\begin{thm}[\cite{UenoTakasaki}, Theorem 1.7]
There exists a function $\tau(s)=\tau(s;x,y)$ satisfying
\begin{equation}
\begin{aligned}
\hat{w}^{(\infty)}_j(s;x,y) &=
 \frac{p_j(-\tilde{\partial}_x)\tau(s;x,y)}{\tau(s;x,y)},
\\
\hat{w}^{(0)}_j(s;x,y) &=
\frac{p_j(-\tilde{\partial}_y)\tau(s+1;x,y)}{\tau(s;x,y)},
\\
\hat{w}^{(\infty)*}_j(s;x,y) &= 
\frac{p_j(\tilde{\partial}_x)\tau(s;x,y)}{\tau(s;x,y)},
\\
\hat{w}^{(0)*}_j(s;x,y) &= 
\frac{p_j(\tilde{\partial}_y)\tau(s-1;x,y)}{\tau(s;x,y)}
\end{aligned}
\label{def:tau-function}
\end{equation}
where $\tilde{\partial}_x=
\left(\partial_{x_1},\partial_{x_2}/2,\partial_{x_3}/3,\ldots\right)$, 
$\tilde{\partial}_y=\left(
\partial_{y_1},\partial_{y_2}/2,\partial_{y_3}/3,\ldots\right)$, and 
the polynomials $p_n(t)$ ($n=0,1,2,\ldots$) are defined by
\begin{equation}
\xi(t,\lambda)
=\exp\!\left[\sum_{j=1}^{\infty}t_n\lambda^j\right]=\sum_{n=0}^{\infty}p_n(t)\lambda^n,
\quad t=(t_1,t_2,\ldots).
\end{equation}
Furthermore, the $\tau$-function $\tau(s;x,y)$ of the 
Toda lattice hierarchy is determined uniquely by
\eqref{def:tau-function} up to a constant multiple factor.
\end{thm}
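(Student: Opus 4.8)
The plan is to extract the $\tau$-function from the bilinear identity \eqref{WexpW=WexpW}, which is the single relation encoding the whole hierarchy. First I would pass from operators to wave functions by introducing
\[
w^{(\infty)}(s;x,y;\lambda)=\hat{W}^{(\infty)}(s;x,y)\,\lambda^{s}\exp\!\Bigl[\sum_{n\geq 1}x_{n}\lambda^{n}\Bigr],\qquad w^{(0)}(s;x,y;\lambda)=\hat{W}^{(0)}(s;x,y)\,\lambda^{s}\exp\!\Bigl[\sum_{n\geq 1}y_{n}\lambda^{-n}\Bigr],
\]
together with the adjoint wave functions formed from the inverse operators \eqref{def:WinftyW0_inverse}. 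Under the correspondence $e^{n\partial_{s}}\mapsto\lambda^{n}$ acting on $\lambda^{s}$, the operator identity \eqref{WexpW=WexpW} is equivalent to the scalar bilinear identity
\[
\operatorname{res}_{\lambda}\bigl[w^{(\infty)}(s;x,y;\lambda)\,w^{(\infty)*}(s';x',y';\lambda)\bigr]=\operatorname{res}_{\lambda}\bigl[w^{(0)}(s;x,y;\lambda)\,w^{(0)*}(s';x',y';\lambda)\bigr],
\]
holding for all integers $s,s'$ and all values of the (primed and unprimed) times; this residue form is the most convenient starting point for constructing a potential.

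Next I would establish existence of $\tau$. Using the Sato equations \eqref{SatoEqs} I would express the flows of the leading coefficient $\hat{w}^{(\infty)}_{1}(s)$ (and the companion quantities for the $y$-flows) as residues of the generators $B_{n}(s)$, $C_{n}(s)$, and show that the $1$-form assembled from these expressions is closed in the time variables $x,y$. Closedness lets me define $\log\tau(s;x,y)$ as its potential on the (formally contractible) space of times, normalized at a base point so that, in particular, $\hat{w}^{(\infty)}_{1}(s)=-\partial_{x_{1}}\log\tau(s)$, which is the $j=1$ case of the first line of \eqref{def:tau-function}. The dependence on the discrete variable $s$ is then pinned down by the difference relations already visible in \eqref{rel:cn_to_w}, consistently with $\hat{w}^{(0)}_{0}(s)=\tau(s+1)/\tau(s)$; because a potential is unique up to an additive constant, $\tau$ is determined up to an overall multiplicative constant, which gives the asserted uniqueness.

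The final and most delicate step is to promote these low-order identities to the full Schur-polynomial formulas \eqref{def:tau-function}. Here I would specialize the time-shift in \eqref{WexpW=WexpW} to a Miwa variable, $x'=x-[\lambda_{0}^{-1}]$ with $[\lambda_{0}^{-1}]_{n}=\lambda_{0}^{-n}/n$, for which $\exp\!\bigl[\sum_{n}(x'_{n}-x_{n})e^{n\partial_{s}}\bigr]$ collapses to the first-order operator $1-\lambda_{0}^{-1}e^{\partial_{s}}$; matching the two sides then yields the generating-function identity $\sum_{j\geq 0}\hat{w}^{(\infty)}_{j}(s)\lambda_{0}^{-j}=\tau(s;x-[\lambda_{0}^{-1}],y)/\tau(s;x,y)$. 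Expanding the right-hand side by Taylor's theorem and using $\exp\!\bigl[-\sum_{n}\lambda_{0}^{-n}\partial_{x_{n}}/n\bigr]=\sum_{j}p_{j}(-\tilde{\partial}_{x})\lambda_{0}^{-j}$ reproduces the first line of \eqref{def:tau-function}, and the analogous Miwa shifts in $y$ and in the adjoint wave functions give the remaining three lines. I expect the main obstacle to be the closedness claimed in the previous step: verifying that all mixed second derivatives of $\log\tau$ agree is exactly where the full force of \eqref{WexpW=WexpW}, equivalently the zero-curvature relations implicit in \eqref{SatoEqs}, must be used, rather than any single flow equation in isolation.
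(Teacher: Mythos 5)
First, a point of context: the paper does not actually prove this statement; it is imported verbatim from Ueno--Takasaki (their Theorem 1.7), so your attempt can only be measured against the argument in that reference. Your overall route --- wave functions, the residue form of the bilinear identity, construction of $\log\tau$ as the potential of a closed $1$-form, and Miwa shifts $x\mapsto x-[\lambda^{-1}]$ to produce the Schur-polynomial expressions --- is indeed the standard Ueno--Takasaki/Date--Jimbo--Kashiwara--Miwa strategy, and your final expansion $\exp\bigl[-\sum_{n}\lambda^{-n}\partial_{x_n}/n\bigr]=\sum_{j}p_j(-\tilde{\partial}_x)\lambda^{-j}$ is exactly how the four formulas of \eqref{def:tau-function} arise. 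Your uniqueness argument (all data determines only derivatives and $s$-differences of $\log\tau$, tied together by $\hat{w}^{(0)}_0(s)=\tau(s+1)/\tau(s)$) is also correct.

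There is, however, a genuine gap in how you divide the work between your steps 2 and 3. What the Sato equations \eqref{SatoEqs} let you express through coefficients of $B_n(s)$, $C_n(s)$ are quantities such as $\partial_{x_n}\hat{w}^{(\infty)}_1(s)=-\partial_{x_n}\partial_{x_1}\log\tau(s)$ and $s$-differences such as $\partial_{x_n}\log\bigl(\tau(s+1)/\tau(s)\bigr)$, i.e.\ \emph{second} derivatives or \emph{differences} of the sought potential. A $1$-form assembled from these expressions has $\hat{w}^{(\infty)}_1(s)$, respectively $\log\bigl(\tau(s+1)/\tau(s)\bigr)$, as its potential --- not $\log\tau(s)$ --- and recovering the true components $v_n=\partial_{x_n}\log\tau(s)$ from them would require integrating in $x_1$, reintroducing arbitrary functions of the other times, which defeats the construction. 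The correct, local formulas for $v_n$ come from triangularly inverting the logarithm of the generating function: writing $\hat{w}^{(\infty)}(s;x;\lambda)=\sum_{j\geq 0}\hat{w}^{(\infty)}_j(s;x,y)\lambda^{-j}$, one must invert
\[
\log \hat{w}^{(\infty)}(s;x;\lambda)
=\Bigl(e^{-\xi(\tilde{\partial}_x,\lambda^{-1})}-1\Bigr)\log\tau(s;x,y),
\]
which is precisely the Miwa-expansion machinery you postpone to step 3. Relatedly, step 3 as written is circular: the specialization of \eqref{WexpW=WexpW} at $x'=x-[\lambda_0^{-1}]$ is an identity among the $\hat{w}_j$'s at shifted and unshifted times; it cannot by itself \emph{yield} the quotient formula $\hat{w}^{(\infty)}(s;x;\lambda_0)=\tau(s;x-[\lambda_0^{-1}],y)/\tau(s;x,y)$, in which $\tau$ already appears. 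The repair is to reverse the order: from the Miwa-specialized bilinear identity derive first the symmetry (Fay-type) relations
\[
\hat{w}^{(\infty)}(s;x-[\mu^{-1}];\lambda)\,\hat{w}^{(\infty)}(s;x;\mu)
=\hat{w}^{(\infty)}(s;x-[\lambda^{-1}];\mu)\,\hat{w}^{(\infty)}(s;x;\lambda),
\]
together with the companion relations coupling the $x$-shifts, the $y$-shifts and $s\mapsto s\pm1$; these relations are simultaneously the closedness conditions you need and the consistency conditions that allow $\tau$ to be constructed order by order in the $\lambda$-expansion. Once $\tau$ is built this way, all four formulas of \eqref{def:tau-function} hold by construction, and uniqueness up to one overall constant follows as you describe.
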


It follows that 
\begin{equation}
\begin{aligned}
c_{-1}(s) &= 
\hat{w}^{(0)}_0(s)\hat{w}^{(0)*}_0(s)=\frac{\tau(s+1)\tau(s-1)}{\tau(s)^2},\\
c_0(s) &= 
\hat{w}^{(0)}_0(s)\hat{w}^{(0)*}_1(s+1)+
\hat{w}^{(0)}_1(s)\hat{w}^{(0)*}_0(s+1)
= \frac{\partial}{\partial y_1}\log\frac{\tau(s)}{\tau(s+1)}.
\end{aligned}
\label{rel:c{-1}c0_to_tau}
\end{equation}

\begin{thm}[\cite{UenoTakasaki}, Theorem 1.11]
$\tau$-functions of Toda lattice hierarchy satisfy the following 
equation (bilinear identity):
\begin{equation}
\begin{aligned}
&\oint\tau(s';x'-[\lambda^{-1}],y')\tau(s;x+[\lambda^{-1}],y)
e^{\xi(x'-x,\lambda)}\lambda^{s'-s}d\lambda
\\
=&
\oint\tau(s'+1;x',y'-[\lambda])\tau(s-1;x,y+[\lambda])
e^{\xi(y'-y,\lambda^{-1})}\lambda^{s'-s}d\lambda,
\end{aligned}
\label{BilinearIdentity}
\end{equation}
where $[\lambda]=\left(\lambda,\lambda^2/2,\lambda^3/3,\ldots\right)$, 
and we have used the notation of formal residue,
\begin{equation}
\oint\left(\sum_n a_n\lambda^n\right)d\lambda
=2\pi\sqrt{-1}\,a_{-1}.
\end{equation}
Conversely, if $\tau(s;x,y)$ solves the bilinear identity 
\eqref{BilinearIdentity}, then $\hat{W}^{(\infty)}(s;x,y)$ and 
$\hat{W}^{(0)}(s;x,y)$ defined by \eqref{def:WinftyW0} and 
\eqref{def:tau-function} satisfy \eqref{SatoEqs}.
\end{thm}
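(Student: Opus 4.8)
The plan is to deduce the bilinear identity \eqref{BilinearIdentity} from the operator identity \eqref{WexpW=WexpW}, which I take as given, by pairing the two sides against the monomial $\lambda^{s}$ and reading off coefficients as contour integrals. The bridge is the residue formula for difference operators: writing $(\,\cdot\,)_{0}$ for the coefficient of $e^{0\partial_s}$ and using the formal adjoint $(a(s)e^{j\partial_s})^{*}=e^{-j\partial_s}a(s)$, one has, for $A(s)=\sum_j a_j(s)e^{j\partial_s}$ and $B(s)=\sum_j b_j(s)e^{j\partial_s}$,
\[
\bigl(A(s)\,B(s)\bigr)_{0}
=\frac{1}{2\pi\sqrt{-1}}\oint\bigl(A(s)\lambda^{s}\bigr)\bigl(B(s)^{*}\lambda^{-s}\bigr)\frac{d\lambda}{\lambda},
\]
which is the one-line check that both sides equal $\sum_j a_j(s)b_{-j}(s+j)$; here $e^{j\partial_s}$ acts on $\lambda^{s}$ as multiplication by $\lambda^{j}$.

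First I would record the action of the dressing operators on $\lambda^{s}$. Expanding $\hat W^{(\infty)}(s)$, $\hat W^{(0)}(s)$ and their inverses through \eqref{def:WinftyW0}, \eqref{def:WinftyW0_inverse}, substituting the $\tau$-function formulas \eqref{def:tau-function}, and using the generating-function identities $\sum_{j\ge0}p_j(\mp\tilde{\partial}_x)\lambda^{-j}=\exp[\mp\sum_{n\ge1}\lambda^{-n}\partial_{x_n}/n]$ together with the $y$-analogue $\sum_{j\ge0}p_j(\mp\tilde{\partial}_y)\lambda^{j}=\exp[\mp\sum_{n\ge1}\lambda^{n}\partial_{y_n}/n]$ (which realize $x\mapsto x\mp[\lambda^{-1}]$, $y\mapsto y\mp[\lambda]$), I obtain
\[
\hat W^{(\infty)}(s)\lambda^{s}=\lambda^{s}\frac{\tau(s;x-[\lambda^{-1}],y)}{\tau(s;x,y)},\qquad
\bigl(\hat W^{(\infty)}(s)^{-1}\bigr)^{*}\lambda^{-s}=\lambda^{-s}\frac{\tau(s+1;x+[\lambda^{-1}],y)}{\tau(s+1;x,y)},
\]
and the analogous expressions on the $0$-side, where the shifts $s\mapsto s+1$ and $s\mapsto s-1$ built into \eqref{def:tau-function} are exactly what produce the arguments $\tau(s+1;\,\cdot\,)$ and $\tau(s-1;\,\cdot\,)$ that appear on the right of \eqref{BilinearIdentity}.

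Next I would feed \eqref{WexpW=WexpW} into the residue formula. Fix the operator index $s'$ and an integer $m$, let $\mathcal A(s')$ denote the common difference operator defined by the two sides of \eqref{WexpW=WexpW}, and apply the formula to $P(s')Q(s')=\mathcal A(s')e^{m\partial_{s'}}$ with $P(s')=\hat W^{(\infty)}(s';x',y')\exp[\sum_n(x'_n-x_n)e^{n\partial_{s'}}]$ and $Q(s')=\hat W^{(\infty)}(s';x,y)^{-1}e^{m\partial_{s'}}$. Since $\exp[\sum_n(x'_n-x_n)e^{n\partial_{s'}}]\lambda^{s'}=\lambda^{s'}e^{\xi(x'-x,\lambda)}$, the factor $P(s')\lambda^{s'}$ becomes $e^{\xi(x'-x,\lambda)}\lambda^{s'}\tau(s';x'-[\lambda^{-1}],y')/\tau(s';x',y')$, while $Q(s')^{*}\lambda^{-s'}$ contributes $\lambda^{-s'+m}\tau(s'-m+1;x+[\lambda^{-1}],y)/\tau(s'-m+1;x,y)$. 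Choosing $m=s'-s+1$ turns the second index into $s$ and the total power of $\lambda$ into $\lambda^{s'-s}$, so that $\bigl(\mathcal A(s')e^{m\partial_{s'}}\bigr)_{0}$ equals the left-hand side of \eqref{BilinearIdentity} divided by $\tau(s';x',y')\tau(s;x,y)$. The identical computation on the $0$-side—now with $e^{\xi(y'-y,\lambda^{-1})}$ arising from $\exp[\sum_n(y'_n-y_n)e^{-n\partial_{s'}}]$—turns the same quantity into the right-hand side of \eqref{BilinearIdentity} over the same denominator. Equality of the two expressions for the single operator $\mathcal A(s')$ then gives \eqref{BilinearIdentity} for all $s$, $s'$, since fixing $s'$ and varying $s$ exhausts all $m$ and hence all coefficients of $\mathcal A(s')$.

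For the converse I would run the correspondence backwards: the bilinear identity, valid for all $x',y',s'$, says that all residues $\bigl(\mathcal A(s')e^{m\partial_{s'}}\bigr)_{0}$ of the two sides agree, hence that the two difference operators in \eqref{WexpW=WexpW} coincide. Differentiating that operator identity in $x'_n$ and setting $x'=x$, $y'=y$ yields $\partial_{x_n}\hat W^{(\infty)}\cdot\hat W^{(\infty)-1}+(L^{(\infty)})^{n}=\partial_{x_n}\hat W^{(0)}\cdot\hat W^{(0)-1}$; splitting into the parts $(\,\cdot\,)_{\ge0}$ and $(\,\cdot\,)_{<0}$ and recalling $B_n=(L^{(\infty)\,n})_{\ge0}$ recovers the first and third equations of \eqref{SatoEqs}, while differentiating in $y'_n$ recovers the other two. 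The conceptual content is entirely carried by the residue formula, which is elementary; the real care—and the main obstacle—is the bookkeeping of the formal adjoints and of the index shifts, since it is precisely these shifts that place $s+1$ and $s-1$ correctly on the right of \eqref{BilinearIdentity}, and a single sign or shift error propagates into the wrong $\tau$-arguments. For the converse one must in addition verify that the operators reconstructed from $\tau$ via \eqref{def:WinftyW0} and \eqref{def:tau-function} really have the normalized form \eqref{def:WinftyW0} (in particular $\hat w^{(0)}_0\neq0$), which holds wherever $\tau$ does not vanish.
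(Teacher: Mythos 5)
The paper itself offers no proof of this theorem: it is imported verbatim from Ueno--Takasaki (their Theorem 1.11), so there is no in-paper argument to compare against. Judged on its own merits, your proposal is correct, and it is in essence a reconstruction of the cited source's derivation rather than a new route. The key residue lemma is right: with the adjoint convention $(a(s)e^{j\partial_s})^{*}=e^{-j\partial_s}a(s)$ both sides of your formula equal $\sum_j a_j(s)b_{-j}(s+j)$, and the index bookkeeping you were worried about does come out correctly. Indeed, from \eqref{def:WinftyW0_inverse} and \eqref{def:tau-function} one gets $\bigl(\hat W^{(\infty)}(s')^{-1}\bigr)^{*}\lambda^{-s'}=\lambda^{-s'}\tau(s'+1;x+[\lambda^{-1}],y)/\tau(s'+1;x,y)$, so conjugating by $e^{m\partial_{s'}}$ with $m=s'-s+1$ replaces $s'+1$ by $s$ and produces exactly $\lambda^{s'-s}$ after the factor $d\lambda/\lambda$; on the other side the same shift sends $\tau(s'-m;\,\cdot\,)$ to $\tau(s-1;x,y+[\lambda])$, which is precisely how the asymmetric arguments $s'+1$ and $s-1$ arise in \eqref{BilinearIdentity}. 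Your converse is also sound: equality of all residues $\bigl(\mathcal A(s')e^{m\partial_{s'}}\bigr)_{0}$ over all $m,s'$ recovers the operator identity \eqref{WexpW=WexpW}, and differentiating in $x'_n$ (resp. $y'_n$) at $x'=x$, $y'=y$, then projecting with $(\,\cdot\,)_{\geq 0}$ and $(\,\cdot\,)_{<0}$, yields all four equations of \eqref{SatoEqs}, using that $\partial_{x_n}\hat W^{(\infty)}\cdot\hat W^{(\infty)-1}$ is strictly negative order while $\partial_{x_n}\hat W^{(0)}\cdot\hat W^{(0)-1}$ is non-negative order. The only points you pass over lightly are standard ones: the products of doubly infinite symbol series must be read coefficient-wise as formal power series in $x'-x$, $y'-y$, and the nonvanishing of $\tau$ (hence $\hat w^{(0)}_0\neq 0$) that you flag at the end is needed already in the forward direction to form the quotients in \eqref{def:tau-function}. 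Neither is a genuine gap.
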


\section{Time-flows with negative weight with 2-reduction condition}
\subsection{Reduction to Goldstein-Petrich hierarchy}
We now impose the 2-reduction condition\cite{UenoTakasaki}
\begin{equation}
L^{(\infty)}(s)^2=e^{2\partial_s},\quad
L^{(0)}(s)^2=e^{-2\partial_s},
\label{def:2-reduction}
\end{equation}
that implies
\begin{gather}
W^{(\infty)}(s+2)=W^{(\infty)}(s),\quad
W^{(0)}(s+2)=W^{(0)}(s),
\label{2-reduced_W}\\
L^{(\infty)}(s+2)=L^{(\infty)}(s),\quad
L^{(0)}(s+2)=L^{(0)}(s).
\end{gather}

\begin{prop}[\cite{UenoTakasaki}, Proposition 1.13]
Let $L^{(\infty)}(s;x,y)$, $L^{(0)}(s;x,y)$ be solutions to 
the Toda lattice hierarchy \eqref{def:TLhierarchy_x},
\eqref{def:TLhierarchy_y}, which 
satisfy the 2-reduction conditions \eqref{def:2-reduction}. 
Then one finds that
\begin{equation}
\begin{aligned}
\frac{\partial L^{(\infty)}}{\partial x_{2n}}
=\frac{\partial L^{(0)}}{\partial x_{2n}}
=\frac{\partial L^{(\infty)}}{\partial y_{2n}}
=\frac{\partial L^{(0)}}{\partial y_{2n}}=0
\end{aligned}
\end{equation}
for $n=1,2,\ldots$.
\end{prop}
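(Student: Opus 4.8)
The plan is to exploit the fact that, under the 2-reduction condition \eqref{def:2-reduction}, every even power of $L^{(\infty)}(s)$ and of $L^{(0)}(s)$ collapses to a pure shift operator, and that such a shift commutes simultaneously with both Lax operators. First I would identify the even flow generators. For $n=2m$ the reduction gives $L^{(\infty)}(s)^{2m}=\left(L^{(\infty)}(s)^2\right)^m=e^{2m\partial_s}$, which is a difference operator supported entirely in the non-negative part, so the projection in the definition of $B_{2m}(s)$ truncates nothing:
\[
B_{2m}(s)=\left(L^{(\infty)}(s)^{2m}\right)_{\geq 0}=e^{2m\partial_s}.
\]
Symmetrically, $L^{(0)}(s)^{2m}=e^{-2m\partial_s}$ lies entirely in the negative part (for $m\geq 1$), whence
\[
C_{2m}(s)=\left(L^{(0)}(s)^{2m}\right)_{<0}=e^{-2m\partial_s}.
\]

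The crucial observation is that each of these shifts equals a power of \emph{both} Lax operators. From \eqref{def:2-reduction} one reads off
\[
e^{2m\partial_s}=L^{(\infty)}(s)^{2m}=L^{(0)}(s)^{-2m},\qquad
e^{-2m\partial_s}=L^{(0)}(s)^{2m}=L^{(\infty)}(s)^{-2m}.
\]
Substituting $B_{2m}(s)=e^{2m\partial_s}$ and $C_{2m}(s)=e^{-2m\partial_s}$ into the Lax equations \eqref{def:TLhierarchy_x}, \eqref{def:TLhierarchy_y} and using that any operator commutes with its own integer powers, I obtain
\[
\frac{\partial L^{(\infty)}}{\partial x_{2m}}=\left[L^{(\infty)}(s)^{2m},\,L^{(\infty)}(s)\right]=0,
\qquad
\frac{\partial L^{(0)}}{\partial x_{2m}}=\left[L^{(0)}(s)^{-2m},\,L^{(0)}(s)\right]=0,
\]
and likewise, from the $y$-flows,
\[
\frac{\partial L^{(\infty)}}{\partial y_{2m}}=\left[L^{(\infty)}(s)^{-2m},\,L^{(\infty)}(s)\right]=0,
\qquad
\frac{\partial L^{(0)}}{\partial y_{2m}}=\left[L^{(0)}(s)^{2m},\,L^{(0)}(s)\right]=0.
\]
Taking $m=1,2,\ldots$ gives all four vanishing statements.

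There is no genuine obstacle here: the entire argument rests on the single pair of identities $e^{\pm 2m\partial_s}=L^{(\infty)}(s)^{\pm 2m}=L^{(0)}(s)^{\mp 2m}$, together with the elementary fact that a pure shift coincides with its own non-negative (respectively negative) projection. The only point deserving a moment's care is precisely that last remark, namely that the truncations $(\,\cdot\,)_{\geq 0}$ and $(\,\cdot\,)_{<0}$ act as the identity on $e^{2m\partial_s}$ and $e^{-2m\partial_s}$ respectively, which is immediate from the definitions once $m\geq 1$; this is what lets $B_{2m}(s)$ and $C_{2m}(s)$ collapse to the bare shifts rather than to truncated operators.
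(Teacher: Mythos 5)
Your proof is correct and is essentially the intended argument: the paper states this proposition without proof (it is quoted from Ueno--Takasaki), but the key identities you derive, $B_{2n}(s)=e^{2n\partial_s}$ and $C_{2n}(s)=e^{-2n\partial_s}$, are exactly the collapsed forms the paper itself records in \eqref{2-reduced_B} and \eqref{2-reduced_Cn}, and the four Lax commutators then vanish just as you say. The one step worth a word of justification is rewriting $e^{2m\partial_s}=L^{(0)}(s)^{-2m}$ and $e^{-2m\partial_s}=L^{(\infty)}(s)^{-2m}$, which presupposes invertibility of the Lax operators; this is immediate because their squares are the invertible shifts $e^{\pm 2\partial_s}$ (equivalently, an operator commuting with an invertible operator commutes with its inverse), so no gap results.
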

\begin{prop}[\cite{UenoTakasaki}, Corollary 1.14] 
Suppose $L^{(\infty)}(s;x,y)$, $L^{(0)}(s;x,y)$ be solutions to 
the Toda lattice hierarchy \eqref{def:TLhierarchy_x},
\eqref{def:TLhierarchy_y}, which 
satisfy the 2-reduction conditions \eqref{def:2-reduction}. 
Then there exist suitable difference operators 
$\hat{W}^{(\infty)}(s;x,y)$, $\hat{W}^{(0)}(s;x,y)$ 
such that the corresponding $\tau$ functions subject to the 
following conditions:
\begin{equation}
\begin{aligned}
& \tau(s;x,y)=\tau'(s;x,y)
\exp\!\left(-\sum_{n=1}^{\infty}nx_n y_n\right),\\
& \tau'(s+2;x,y)=\tau'(s;x,y),\\
& \frac{\partial\tau'(s;x,y)}{\partial x_{2n}}
=\frac{\partial\tau'(s;x,y)}{\partial y_{2n}}=0 \quad (n=1,2,\ldots).
\end{aligned}
\end{equation}
\end{prop}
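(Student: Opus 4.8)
The plan is to read off the even-time dependence of $\tau$ by tracking how the $2$-reduction \eqref{def:2-reduction} trivializes the even flows at the level of the dressing operators. First I would record that $L^{(\infty)}(s)^{2n}=e^{2n\partial_s}$ and $L^{(0)}(s)^{2n}=e^{-2n\partial_s}$, so the even generators collapse to the pure shifts $B_{2n}(s)=e^{2n\partial_s}$ and $C_{2n}(s)=e^{-2n\partial_s}$. Substituting these into the Sato equations \eqref{SatoEqs} and using the $2$-periodicity \eqref{2-reduced_W} of $\hat{W}^{(\infty)}$, $\hat{W}^{(0)}$, the two commutator-type flows drop out, $\partial_{x_{2n}}\hat{W}^{(\infty)}=[e^{2n\partial_s},\hat{W}^{(\infty)}]=0$ and $\partial_{y_{2n}}\hat{W}^{(0)}=[e^{-2n\partial_s},\hat{W}^{(0)}]=0$, while the remaining two become plain left multiplications $\partial_{y_{2n}}\hat{W}^{(\infty)}=e^{-2n\partial_s}\hat{W}^{(\infty)}$, $\partial_{x_{2n}}\hat{W}^{(0)}=e^{2n\partial_s}\hat{W}^{(0)}$, which carry all of the nontrivial even-time dependence.

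The second step is to pass to the wave functions $\Psi^{(\infty)}=\hat{W}^{(\infty)}\lambda^s e^{\xi(x,\lambda)}$ and $\Psi^{(0)}=\hat{W}^{(0)}\lambda^s e^{\xi(y,\lambda^{-1})}$. The reductions $L^{(\infty)}(s)^2=e^{2\partial_s}$, $L^{(0)}(s)^2=e^{-2\partial_s}$ together with the eigenvalue properties force the common monodromy $\Psi^{(\cdot)}(s+2;\lambda)=\lambda^2\Psi^{(\cdot)}(s;\lambda)$, so the even flows act as $\partial_{x_{2n}}\Psi^{(\cdot)}=\lambda^{2n}\Psi^{(\cdot)}$ and $\partial_{y_{2n}}\Psi^{(\cdot)}=\lambda^{-2n}\Psi^{(\cdot)}$. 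Writing each wave function through its $\tau$-quotient via \eqref{def:tau-function}, separating the explicit factor $e^{\xi}$, and expanding in $\lambda$, I would compare the two sides of each even flow; the $\lambda^0$ terms first show the relevant derivatives of $\log\tau$ are $s$-independent, and the surviving power matches $-2ny_{2n}$ (resp.\ $-2nx_{2n}$). Combining this with the vanishing from the first step (which makes $\partial_{x_{2n}}\log\tau$ invariant under all Miwa shifts $x\mapsto x-[\mu]$, hence independent of every $x_m$, and symmetrically in $y$) yields
\begin{equation*}
\frac{\partial\log\tau(s)}{\partial x_{2n}}=-2n\,y_{2n}+d_n,\qquad
\frac{\partial\log\tau(s)}{\partial y_{2n}}=-2n\,x_{2n}+c_n,
\end{equation*}
with $c_n,d_n$ absolute constants. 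This is precisely the assertion that the factor $\exp(-\sum_n n x_ny_n)$ absorbs the entire even-time dependence.

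Next I would treat the periodicity: from $\hat{W}^{(\cdot)}(s+2)=\hat{W}^{(\cdot)}(s)$ and \eqref{def:tau-function}, the quantity $\log[\tau(s+2)/\tau(s)]$ is Miwa-shift invariant in both $x$ and $y$ and is $s$-independent, so it is an absolute constant $c$, giving $\tau(s+2)=e^{c}\tau(s)$. Setting $\tau'(s;x,y)=\tau(s;x,y)\exp(\sum_n n x_ny_n)$ then reduces the three claimed identities to $d_n=c_n=0$ and $c=0$, so that $\partial_{x_{2n}}\log\tau'=d_n$, $\partial_{y_{2n}}\log\tau'=c_n$ vanish and $\tau'(s+2)=e^{c}\tau'(s)=\tau'(s)$.

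The final step, which I expect to be the main obstacle, is justifying the word ``suitable'' in the statement: one must show that $c_n$, $d_n$ and the growth constant $c$ are pure gauge. The freedom in the dressing is $\hat{W}^{(\infty)}\mapsto\hat{W}^{(\infty)}P$, $\hat{W}^{(0)}\mapsto\hat{W}^{(0)}Q$ by $x,y$-independent difference operators preserving the normalizations in \eqref{def:WinftyW0} and compatible with \eqref{WexpW=WexpW}; at the level of $\tau$ this realizes exactly the replacements $\tau(s)\mapsto\tau(s)\exp(-\sum_n d_nx_{2n})$, $\tau(s)\mapsto\tau(s)\exp(-\sum_n c_ny_{2n})$, and $\tau(s)\mapsto e^{-cs/2}\tau(s)$, the last being the scalar rescaling of $\hat{W}^{(0)}$. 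I would check that each of these modifies only one of $\hat{W}^{(\infty)}$, $\hat{W}^{(0)}$ by an admissible constant factor, leaving the other dressing operator and \eqref{WexpW=WexpW} intact, and that the three shifts touch disjoint pieces of $\log\tau$ (the even-$x$-linear part, the even-$y$-linear part, and the $s$-linear part), hence may be performed simultaneously without disturbing the odd-time structure or the $2$-reduction. Verifying this admissibility carefully is the delicate bookkeeping; once it is in place, $\tau'$ satisfies $\partial_{x_{2n}}\tau'=\partial_{y_{2n}}\tau'=0$ and $\tau'(s+2)=\tau'(s)$, with the first identity holding by construction.
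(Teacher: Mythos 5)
The paper does not actually prove this proposition: it is imported verbatim from Ueno--Takasaki (their Corollary 1.14), so there is no in-paper argument to compare against. Judged on its own, your reconstruction is correct and follows what is essentially the original argument: under \eqref{def:2-reduction} one has $B_{2n}=e^{2n\partial_s}$, $C_{2n}=e^{-2n\partial_s}$, and the Sato equations \eqref{SatoEqs} combined with the periodicity \eqref{2-reduced_W} kill two of the four even flows and reduce the other two to shift multiplications; the Miwa-shift analysis of the two wave functions then pins down $\partial_{x_{2n}}\log\tau=-2n\,y_{2n}+d_n$, $\partial_{y_{2n}}\log\tau=-2n\,x_{2n}+c_n$ and $\tau(s+2)=e^{c}\tau(s)$; finally the constants are pure gauge. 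The step you flag as delicate does go through exactly as you describe: replacing $\tau$ by $\tau\exp\bigl(-\sum_n d_n x_{2n}\bigr)$ amounts to $\hat{W}^{(\infty)}\mapsto\hat{W}^{(\infty)}\exp\bigl(\sum_n \tfrac{d_n}{2n}e^{-2n\partial_s}\bigr)$ with $\hat{W}^{(0)}$ untouched, replacing $\tau$ by $\tau\exp\bigl(-\sum_n c_n y_{2n}\bigr)$ amounts to $\hat{W}^{(0)}\mapsto\hat{W}^{(0)}\exp\bigl(\sum_n \tfrac{c_n}{2n}e^{2n\partial_s}\bigr)$, and $\tau(s)\mapsto e^{-cs/2}\tau(s)$ is the scalar rescaling $\hat{W}^{(0)}\mapsto e^{-c/2}\hat{W}^{(0)}$; each factor has constant coefficients, hence commutes with $e^{\partial_s}$ and with the exponentials in \eqref{WexpW=WexpW}, so the form \eqref{def:WinftyW0}, the Lax operators, the Sato equations and the bilinear relation are all preserved, and the three modifications act on disjoint pieces of $\log\tau$ as you claim.
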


We consider the time-evolutions with respect to 
the variables with negative weight $y=(y_1,y_2,\ldots)$ 
under the 2-reduction condition \eqref{def:2-reduction}. 
In this case, one can write down
the difference operators $C_n(s)$ ($n=1,2,\ldots$) explicitly:
\begin{equation}
C_{2n}(s)=e^{-2n\partial_s},\quad
C_{2n-1}=\sum_{j=-1}^{2n-3}c_j(s)e^{(j-2n)\partial_s}.
\label{2-reduced_Cn}
\end{equation}
Applying \eqref{2-reduced_Cn} to 
\eqref{def:TLhierarchy_y} and \eqref{SatoEqs}, 
we obtain the following equations 
($n=0,1,2,\ldots$):
\begin{align}
\frac{\partial w_1(s)}{\partial y_{2n+1}} &= c_{2n-1}(s) 
\label{dw1/dy(2n+1)}\\
\frac{\partial c_{2n}(s)}{\partial y_1} &=
\frac{\partial c_{0}(s)}{\partial y_{2n+1}} =
c_{-1}(s)c_{2n+1}(s+1)-c_{-1}(s+1)c_{2n+1}(s),
\label{dc2n/y1}\\
\frac{\partial c_{-1}(s)}{\partial y_{2n+1}} &=
\frac{\partial c_{2n-1}(s)}{\partial y_1} =
c_{-1}(s)\left\{c_{2n}(s+1)-c_{2n}(s)\right\},
\label{dc(-1)/dy(2n+1)}
\end{align}
where we have used the property $c_j(s+2)=c_j(s)$.
\begin{prop}
For $n=0,1,2,\ldots$, the coefficients $c_{n}(s;x,y)$ can be 
represented by $c_{-1}(s;x,y)$. For example, 
$c_0(s;x,y)$ and $c_1(s;x,y)$ can be written as
\begin{equation}
\begin{aligned}
c_0(s)&=-\frac{1}{2c_{-1}(s)}\frac{\partial c_{-1}(s)}{\partial y_1}
=-\frac{1}{2}\frac{\partial}{\partial y_1}\log c_{-1}(s),
\\
c_1(s)&=-\frac{c_{-1}(s)}{2}\left\{
c_0(s)^2+\frac{\partial c_0(s)}{\partial y_1}
\right\}
\\
&=-\frac{c_{-1}(s)}{8}\left[
\left\{
\frac{\partial}{\partial y_1}\log c_{-1}(s)
\right\}^2
 -2\frac{\partial^2}{\partial y_1^2}\log c_{-1}(s)
\right].
\end{aligned}
\end{equation}
\end{prop}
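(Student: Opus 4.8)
The plan is to combine two kinds of relations among the $c_j(s)$: the purely algebraic constraints coming from the $2$-reduction \eqref{def:2-reduction}, and the first-order differential relations \eqref{dc2n/y1}, \eqref{dc(-1)/dy(2n+1)} generated by the $y_1$-flow. The key observation is that each kind of relation ties together the value $c_{n+1}(s)$ and its shift $c_{n+1}(s+1)$, and that a suitable linear combination of the two eliminates the shifted term, leaving $c_{n+1}(s)$ expressed through lower-index coefficients.

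First I would expand the reduction condition $L^{(0)}(s)^2=e^{-2\partial_s}$ explicitly. Writing $L^{(0)}(s)=\sum_{j\geq-1}c_j(s)e^{j\partial_s}$ and using $e^{j\partial_s}c_k(s)=c_k(s+j)e^{j\partial_s}$, the coefficient of $e^{m\partial_s}$ gives $\sum_j c_j(s)c_{m-j}(s+j)=\delta_{m,-2}$. The bottom case $m=-2$ yields $c_{-1}(s)c_{-1}(s-1)=1$, which together with the periodicity $c_j(s+2)=c_j(s)$ gives $c_{-1}(s+1)=c_{-1}(s)^{-1}$; the case $m=-1$ gives $c_0(s)+c_0(s+1)=0$. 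These two facts are the engine of the whole argument.

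For the base case, substituting $c_0(s+1)=-c_0(s)$ into \eqref{dc(-1)/dy(2n+1)} at $n=0$, namely $\partial_{y_1}c_{-1}(s)=c_{-1}(s)\{c_0(s+1)-c_0(s)\}$, gives $\partial_{y_1}c_{-1}(s)=-2c_{-1}(s)c_0(s)$, which is the stated formula for $c_0(s)$. For the inductive step I would, at each level, pair one algebraic relation with one differential relation. To reach an odd coefficient $c_{2n+1}(s)$, I take the coefficient of $e^{2n\partial_s}$ in $L^{(0)}(s)^2=e^{-2\partial_s}$, which after using periodicity reads $c_{-1}(s)c_{2n+1}(s+1)+c_{-1}(s+1)c_{2n+1}(s)=-\sum_{j=0}^{2n}c_j(s)c_{2n-j}(s+j)$, and subtract from it the differential relation \eqref{dc2n/y1}, $\partial_{y_1}c_{2n}(s)=c_{-1}(s)c_{2n+1}(s+1)-c_{-1}(s+1)c_{2n+1}(s)$. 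The shifted term $c_{2n+1}(s+1)$ cancels and, after dividing by $2c_{-1}(s+1)=2\,c_{-1}(s)^{-1}$, I obtain $c_{2n+1}(s)$ as a $y_1$-differential polynomial in $c_{-1},c_0,\dots,c_{2n}$. Symmetrically, to reach an even coefficient $c_{2n}(s)$ I pair the coefficient of $e^{(2n-1)\partial_s}$ with \eqref{dc(-1)/dy(2n+1)}, and again the shifted term cancels. Since each step raises the index by one and uses only coefficients of strictly lower index, induction on $n$ reduces every $c_n(s)$ ($n\geq0$) to a $y_1$-differential expression in $c_{-1}(s)$ alone, with $c_{-1}(s+1)$ always traded for $c_{-1}(s)^{-1}$.

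Specializing the odd step to $n=0$ reproduces the displayed $c_1$: the pair becomes $c_{-1}(s)c_1(s+1)+c_{-1}(s+1)c_1(s)=-c_0(s)^2$ and $\partial_{y_1}c_0(s)=c_{-1}(s)c_1(s+1)-c_{-1}(s+1)c_1(s)$, whose difference gives $2c_{-1}(s+1)c_1(s)=-c_0(s)^2-\partial_{y_1}c_0(s)$; using $c_{-1}(s+1)^{-1}=c_{-1}(s)$ yields the first form, and inserting the already-established $c_0(s)=-\tfrac12\partial_{y_1}\log c_{-1}(s)$ gives the second. The main obstacle I anticipate is purely bookkeeping: keeping the integer shifts $s\mapsto s\pm1$ straight when invoking periodicity and the reciprocal relation $c_{-1}(s)c_{-1}(s+1)=1$, and tracking signs in the cancellation, since it is precisely the correct matching of the shifted arguments in the algebraic and in the differential relation that makes the off-diagonal term drop out.
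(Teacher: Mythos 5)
Your proposal is correct and follows essentially the same route as the paper's own (much terser) proof: both combine the algebraic constraints \eqref{2-reduced_cn} obtained by expanding the $2$-reduction condition $L^{(0)}(s)^2=e^{-2\partial_s}$ with the differential relations \eqref{dc2n/y1} and \eqref{dc(-1)/dy(2n+1)}. Your parity-based pairing and explicit cancellation of the shifted terms $c_{n+1}(s+1)$ is exactly the elimination the paper leaves implicit in the sentence ``the desired result can be obtained from \eqref{dc2n/y1}, \eqref{dc(-1)/dy(2n+1)} and \eqref{2-reduced_cn}.''
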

\begin{proof}
{}From \eqref{def:L} and \eqref{def:2-reduction}, we have
\begin{equation}
\begin{aligned}
& c_{-1}(s)c_{-1}(s-1)=1,\quad c_0(s)+c_0(s-1)=0,\\
& c_{-1}(s)c_{k+1}(s-1)+c_{-1}(s+k+1)c_{k+1}(s)
+\sum_{j=0}^k c_j(s)c_{k-j}(s+j)=0.
\end{aligned}
\label{2-reduced_cn}
\end{equation}
The desired result can be obtained from 
\eqref{dc2n/y1}, \eqref{dc(-1)/dy(2n+1)} and \eqref{2-reduced_cn}.
\end{proof}

\noindent\textbf{Remark:} 
Under the 2-reduction conditions \eqref{def:2-reduction}, 
the map
\begin{equation}
\sum_{n\in\mathbb{Z}}a_n(s)e^{n\partial_s}
\mapsto 
\sum_{s\in\mathbb{Z}}
\begin{bmatrix}
a_n(0) & 0\\ 0 & a_n(1)
\end{bmatrix}
\begin{bmatrix}
0 & 1\\ \zeta^2 & 0
\end{bmatrix}^n
\end{equation}
gives an algebra isomorphism \cite{UenoTakasaki}. 
For example, the operators 
$C_1(s)$, $C_3(s)$ are mapped as follows:
\begin{equation}
\begin{aligned}
C_1(s) & \mapsto 
\begin{bmatrix}
c_{-1}(0) & 0\\ 0 & c_{-1}(1)
\end{bmatrix}
\begin{bmatrix}
0 & \zeta^{-2} \\ 1& 0
\end{bmatrix}
=\begin{bmatrix}
0 & c_{-1}(0)\zeta^{-2}\\ 1/c_{-1}(0) & 0
\end{bmatrix},\\
C_3(s) & \mapsto 
\begin{bmatrix}
c_0(0)\zeta^{-2} & c_{-1}(0)\zeta^{-4}+c_1(0)\zeta^{-2}\\
\zeta^{-2}/c_{-1}(0)+c_1(1) & -c_0(0)\zeta^{-2}
\end{bmatrix}.
\end{aligned}
\end{equation}
Applying this isomorphism to the equations \eqref{def:TLhierarchy_x}, 
\eqref{def:TLhierarchy_y}, one obtains the Lax equations of $2\times
2$-matrix form. 
\medskip

For $n=0,1,2,\ldots$, define $F^{(n)}(s)$ and $G^{(n)}(s)$ as
\begin{equation}
\begin{aligned}
F^{(n)}(s) &= \frac{1}{2}\left\{
c_{-1}(s+1)c_{2n-1}(s)-c_{-1}(s)c_{2n-1}(s+1)
\right\},\\
G^{(n)}(s) &= \frac{1}{2}\left\{
c_{-1}(s+1)c_{2n-1}(s)+c_{-1}(s)c_{2n-1}(s+1)
\right\}.
\end{aligned}
\label{def:FG}
\end{equation}
{}From \eqref{dw1/dy(2n+1)}, \eqref{2-reduced_cn} and \eqref{def:FG},
we have
\begin{equation}
\frac{\partial w_1(s)}{\partial y_{2n+1}} 
=\frac{F^{(n)}(s)+G^{(n)}(s)}{c_{-1}(s+1)}
=c_{-1}(s)\left\{
F^{(n)}(s)+G^{(n)}(s)\right\}.
\label{dw1/dy_2n+1}
\end{equation}
It is straightforward to show that
\begin{equation}
 \begin{aligned}
\frac{\partial F^{(n)}(s)}{\partial y_1}&=
2c_0(s)G^{(n)}(s)+c_{2n}(s+1)-c_{2n}(s),\\
\frac{\partial G^{(n)}(s)}{\partial y_1}&=2c_0(s)F^{(n)}(s).
 \end{aligned}
\label{dF/dy1,dG/dy1}
\end{equation}

Next we consider reality condition. Assume 
$x_j,y_j\in\mathbb{R}$ ($j=1,2,\ldots$) and 
the $\tau$-function $\tau(s;x,y)$ satisfies
\begin{equation}
\overline{\tau(s;x,y)}=\tau(s+1;x,y)
\label{reality_condition},
\end{equation}
where $\overline{\,\cdot\,}$ denotes complex conjugation.
Under this condition, the following relations hold:
\begin{equation}
\begin{aligned}
& \overline{\hat{w}^{(\infty)}_j(s)}=\hat{w}^{(\infty)}_j(s+1), 
\quad 
\overline{\hat{w}^{(0)}_j(s)}=\hat{w}^{(0)}_j(s+1), \\
& \overline{b_{-n}(s)}= b_{-n}(s+1),\quad
\overline{c_{n}(s)}= c_{n}(s+1),\\
& \overline{F^{(n)}(s)}=-F^{(n)}(s), \quad
\overline{G^{(n)}(s)}=G^{(n)}(s).
\end{aligned}
\end{equation}
Furthermore, it follows from \eqref{rel:c{-1}c0_to_tau} that
\begin{equation}
c_{-1}(s)\,\overline{c_{-1}(s)}=1,\quad
c_0(s)+\overline{c_0(s)}=0.
\end{equation}
\begin{thm}[Representation formula in terms of the $\tau$-functions]
If we set 
\begin{equation}
\begin{aligned}
&x=2y_1,\quad t_n=2y_{2n-1}\quad (n=1,2,\ldots),\\
&Z=\hat{w}^{(\infty)}_1(s=0;x,y)=-\frac{\partial}{\partial x_1}\log\tau(0;x,y),\\
&T=\frac{1}{2}c_{-1}(s=0;x,y)=\frac{\tau(1;x,y)^2}{2\tau(0;x,y)^2},\\
&\kappa = \sqrt{-1}\,c_0(s=0;x,y)=
\sqrt{-1}\,\frac{\partial}{\partial y_1}\log\frac{\tau(0;x,y)}{\tau(1;x,y)},\\
&f^{(n)}=-\sqrt{-1}\,F^{(n-1)}(s=0)\,, \quad
 g^{(n)}=G^{(n-1)}(s=0), 
\end{aligned}
\end{equation}
then $Z,T,\kappa,f^{(n)},g^{(n)}$ solve the equations
\eqref{def:coeff_f_g}, \eqref{eqs:GPhier_cplx_coord}.
\end{thm}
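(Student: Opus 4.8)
The plan is to carry out every computation at $s=0$ and to convert the coordinate identifications $x=2y_1$ and $t_n=2y_{2n-1}$ into the derivative rules $\partial_x=\tfrac12\partial_{y_1}$ and $\partial_{t_n}=\tfrac12\partial_{y_{2n-1}}$. With these in hand, each of the relations in \eqref{eqs:GPhier_cplx_coord} and \eqref{def:coeff_f_g} is verified by substituting the Toda-side expressions and invoking the identities \eqref{dw1/dy(2n+1)}, \eqref{dw1/dy_2n+1}, \eqref{dF/dy1,dG/dy1}, \eqref{dc2n/y1}, the definitions \eqref{def:FG}, the Proposition expressing the coefficients $c_n(s)$ through $c_{-1}(s)$, and the $2$-reduction relations \eqref{2-reduced_cn} together with the periodicity $c_j(s+2)=c_j(s)$. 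Throughout I use the two normalizations $c_{-1}(0)=2T$ and $c_0(0)=-\sqrt{-1}\,\kappa$, the latter being merely a restatement of the definition of $\kappa$, and the consequences $F^{(n-1)}(0)=\sqrt{-1}\,f^{(n)}$ and $G^{(n-1)}(0)=g^{(n)}$.

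First I would settle the three relations of \eqref{eqs:GPhier_cplx_coord}. Recalling $Z=\hat{w}^{(\infty)}_1(0)=w_1(0)$, the Frenet relation $T=Z_x$ is the $n=0$ case of \eqref{dw1/dy(2n+1)}, i.e. $\partial_{y_1}w_1(0)=c_{-1}(0)$, so that $Z_x=\tfrac12\partial_{y_1}w_1(0)=\tfrac12 c_{-1}(0)=T$. For $T_x=\sqrt{-1}\,\kappa T$ I use the Proposition in the form $\partial_{y_1}c_{-1}(0)=-2\,c_{-1}(0)c_0(0)$, giving $T_x=\tfrac14\partial_{y_1}c_{-1}(0)=-\tfrac12 c_{-1}(0)c_0(0)=-c_0(0)\,T=\sqrt{-1}\,\kappa T$. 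The time evolution comes from \eqref{dw1/dy_2n+1} with the index shifted by one, $\partial_{y_{2n-1}}w_1(0)=c_{-1}(0)\{F^{(n-1)}(0)+G^{(n-1)}(0)\}$, so that $\partial_{t_n}Z=\tfrac12 c_{-1}(0)\{F^{(n-1)}(0)+G^{(n-1)}(0)\}=T\{g^{(n)}+\sqrt{-1}\,f^{(n)}\}$.

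Next I would verify \eqref{def:coeff_f_g}. The initial data follows from \eqref{def:FG} at $n=0$: the identity $c_{-1}(s+1)c_{-1}(s)=1$ from \eqref{2-reduced_cn} forces $F^{(0)}(s)\equiv0$ and $G^{(0)}(s)\equiv1$, whence $f^{(1)}=0$, $g^{(1)}=1$. The relation $g^{(n)}_x=\kappa f^{(n)}$ is the second equation of \eqref{dF/dy1,dG/dy1}, since $g^{(n)}_x=\tfrac12\partial_{y_1}G^{(n-1)}(0)=c_0(0)F^{(n-1)}(0)=\kappa f^{(n)}$. The recursion $f^{(n+1)}=(f^{(n)}_x+\kappa g^{(n)})_x$ is the single multistep identity: the first equation of \eqref{dF/dy1,dG/dy1} gives $f^{(n)}_x+\kappa g^{(n)}=-\tfrac{\sqrt{-1}}{2}\{c_{2n-2}(1)-c_{2n-2}(0)\}$, the $2c_0(0)G^{(n-1)}(0)$ term cancelling against $\kappa g^{(n)}$; applying $\tfrac12\partial_{y_1}$ and then \eqref{dc2n/y1}, the periodicity $c_j(s+2)=c_j(s)$, and the definition \eqref{def:FG} yields $-\sqrt{-1}\,F^{(n)}(0)=f^{(n+1)}$. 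Finally $f^{(2)}=\kappa_x$ is the $n=1$ case of this recursion, while $g^{(2)}=\kappa^2/2$ is obtained by inserting the explicit expression for $c_1(s)$ from the Proposition into \eqref{def:FG}; the terms odd in $\partial_{y_1}c_0(0)$ cancel and one is left with $G^{(1)}(0)=-\tfrac12 c_0(0)^2=\kappa^2/2$.

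I expect the principal difficulty to be organizational rather than conceptual: one must track the index shift $y_{2n+1}\leftrightarrow y_{2n-1}$, the factors $\tfrac12$ introduced by $x=2y_1$ and $t_n=2y_{2n-1}$, and the powers of $\sqrt{-1}$ so that the reality normalizations \eqref{reality_condition} are respected. The only genuinely nontrivial step is the $f^{(n+1)}$ recursion, which chains the first relation of \eqref{dF/dy1,dG/dy1} with \eqref{dc2n/y1} and the definition of $F^{(n)}$; and the check $g^{(2)}=\kappa^2/2$ is the one place where the explicit $2$-reduced formula for $c_1$ is needed rather than the recursion alone, since $g^{(n)}_x=\kappa f^{(n)}$ determines $g^{(n)}$ only up to an $x$-independent constant.
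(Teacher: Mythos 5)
Your proposal is correct and follows essentially the same route as the paper: the paper's (very terse) proof derives the three relations of \eqref{eqs:GPhier_cplx_coord} from \eqref{dw1/dy(2n+1)}, \eqref{dc(-1)/dy(2n+1)}, \eqref{dw1/dy_2n+1}, and the recurrences \eqref{def:coeff_f_g} from \eqref{dc2n/y1}, \eqref{def:FG} and \eqref{dF/dy1,dG/dy1}, which are exactly the ingredients you chain together (your use of the Proposition for $\partial_{y_1}c_{-1}$ and for $c_1$ is just \eqref{dc(-1)/dy(2n+1)} and \eqref{2-reduced_cn} repackaged). Your write-up simply makes explicit the index shifts, the factors of $\tfrac12$, and the verification of $g^{(2)}=\kappa^2/2$ that the paper leaves to the reader.
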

\begin{proof}
The first equation of \eqref{eqs:GPhier_cplx_coord} follows from 
\eqref{dw1/dy(2n+1)}. 
The second and the third are obtained from 
\eqref{dc(-1)/dy(2n+1)}, \eqref{dw1/dy_2n+1}.
The recurrence relations \eqref{def:coeff_f_g} follows 
from \eqref{dc2n/y1}, \eqref{def:FG} and \eqref{dF/dy1,dG/dy1}.
\end{proof}

\subsection{Discrete mKdV flow on discrete curves}
We recall a discrete analogue of the mKdV-flow of plane curve 
introduced by Matsuura \cite{Matsuura}. 
Let $\gamma^m_n:\mathbb{Z}^2\to\mathbb{C}$ be a map describing
the discrete motion of discrete plane curve with segment length $a_n$:
\begin{equation}
\begin{aligned}
& \left|\frac{\gamma^m_{n+1}-\gamma^m_n}{a_n}\right|=1,\\
& \frac{\gamma^m_{n+1}-\gamma^m_n}{a_n} = e^{\sqrt{-1}K^m_n}
\frac{\gamma^m_{n}-\gamma^m_{n-1}}{a_{n-1}}, \\
& \frac{\gamma^{m+1}_n-\gamma^m_n}{b_n} = e^{\sqrt{-1}W^m_n}
\frac{\gamma^m_{n+1}-\gamma^m_n}{a_n}.
\end{aligned}
\label{eq:dmKdVmotion}
\end{equation}
The compatibility condition for \eqref{eq:dmKdVmotion} implies 
the existence of the function $\theta^m_n$ defined by
\begin{equation}
W^m_n=\frac{\theta^{m+1}_n-\theta^m_{n+1}}{2},\quad
K^m_n=\frac{\theta^m_{n+1}-\theta^m_{n-1}}{2}.
\label{def:theta^m_n}
\end{equation}
Then the isoperimetric condition (the first equation in
\eqref{eq:dmKdVmotion}) 
implies that $\theta^m_n$ satisfies the 
discrete potential mKdV equation \cite{Hirota}
\begin{equation}
\tan\!\left(\frac{\theta^{m+1}_{n+1}-\theta^m_n}{2}\right)
=\frac{b_m+a_n}{b_m-a_n}
\tan\!\left(\frac{\theta^{m+1}_{n}-\theta^m_{n+1}}{2}\right).
\end{equation}

In what follows, we will show that the equations \eqref{eq:dmKdVmotion}
can be obtained from the Toda lattice hierarchy.
We introduce discrete variables $m,n\in\mathbb{Z}$ and assume 
$\tilde{y}_k$ depends on $m,n$ as 
\begin{equation}
\tilde{y}_k(m,n) = -\sum_{n'}^{n-1}\frac{a_{n'}^k}{k}
-\sum_{m'}^{m-1}\frac{b_{m'}^k}{k}
\quad (k=1,2,3,\ldots),
\label{def:naMiwaTrf}
\end{equation}
which is a non-autonomous version of Miwa transformation \cite{WTS}. 
We remark that if $a_n=a$ and $b_m=b$ for any $n,m$ then 
\eqref{def:naMiwaTrf} is reduced to original Miwa transformation 
\cite{Miwa}:
\begin{equation}
\tilde{y}_k(m,n) = -\frac{na^k}{k}-\frac{mb^k}{k}
\quad (k=1,2,3,\ldots).
\end{equation}
To consider the dependence on $m,n$, we use the following 
abbreviation:
\begin{equation}
\begin{aligned}
\hat{W}^{(\infty)}(s;m,n)&=\hat{W}^{(\infty)}(s;x,y=\tilde{y}(m,n)),\\
\hat{W}^{(0)}(s;m,n)&=\hat{W}^{(0)}(s;x,y=\tilde{y}(m,n)).
\end{aligned}
\end{equation}
\begin{prop}
$\hat{W}^{(\infty)}(s;m,n)$ and $\hat{W}^{(0)}(s;m,n)$ satisfy 
\begin{equation}
\begin{aligned}
&\hat{W}^{(\infty)}(s;m,n+1)
= \left\{1-a_n\tilde{u}(s;m,n)e^{-\partial_s}\right\}
\hat{W}^{(\infty)}(s;m,n),
\\
&\hat{W}^{(0)}(s;m,n+1)
\left(1-a_ne^{-\partial_s}\right)
= \left\{1-a_n\tilde{u}(s;m,n)e^{-\partial_s}\right\}
\hat{W}^{(0)}(s;m,n),
\\
&\hat{W}^{(\infty)}(s;m+1,n)
= \left\{1-b_m\tilde{v}(s;m,n)e^{-\partial_s}\right\}
\hat{W}^{(\infty)}(s;m,n),
\\
&\hat{W}^{(0)}(s;m+1,n)
\left(1-b_me^{-\partial_s}\right)
= \left\{1-b_m\tilde{v}(s;m,n)e^{-\partial_s}\right\}
\hat{W}^{(0)}(s;m,n),
\end{aligned}
\label{eq:discreteSato}
\end{equation}
where 
\begin{equation}
\begin{aligned}
\tilde{u}(s;m,n)=\frac{\hat{w}_0^{(0)}(s;m,n+1)}{\hat{w}_0^{(0)}(s-1;m,n)}
&=\frac{\tau(s-1;m,n)\tau(s+1;m,n+1)}{\tau(s;m,n)\tau(s;m,n+1)},
\\
\tilde{v}(s;m,n)=\frac{\hat{w}_0^{(0)}(s;m+1,n)}{\hat{w}_0^{(0)}(s-1;m,n)}
&=\frac{\tau(s-1;m,n)\tau(s+1;m+1,n)}{\tau(s;m,n)\tau(s;m+1,n)}.
\end{aligned}
\label{def:tildeU_tildeV}
\end{equation}
\end{prop}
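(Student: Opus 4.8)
The plan is to obtain all four relations as specializations of the single exponential identity \eqref{WexpW=WexpW}, reading the discrete shifts $n\mapsto n+1$ and $m\mapsto m+1$ as Miwa shifts of the negative-weight times. First I would note that, by \eqref{def:naMiwaTrf}, advancing $n\mapsto n+1$ with $m$ and $x$ held fixed is precisely the substitution $y\mapsto y-[a_n]$ where $[a_n]=(a_n,a_n^2/2,a_n^3/3,\ldots)$, and likewise $m\mapsto m+1$ is $y\mapsto y-[b_m]$. Putting $x'=x$ in \eqref{WexpW=WexpW} trivializes the left exponential, while $y'=y-[a_n]$ turns the right exponential into
\[
\exp\!\left[-\sum_{k=1}^{\infty}\frac{a_n^k}{k}e^{-k\partial_s}\right]
=1-a_ne^{-\partial_s},
\]
by the expansion $\sum_{k\geq1}z^k/k=-\log(1-z)$. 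The identity therefore collapses to
\[
M(s):=\hat{W}^{(\infty)}(s;m,n+1)\,\hat{W}^{(\infty)}(s;m,n)^{-1}
=\hat{W}^{(0)}(s;m,n+1)\left(1-a_ne^{-\partial_s}\right)\hat{W}^{(0)}(s;m,n)^{-1}.
\]

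The second step is to determine the shape of $M(s)$ by playing the two expressions against each other. By the normalization in \eqref{def:WinftyW0}, both $\hat{W}^{(\infty)}(s)$ and its inverse are of the form $1+\sum_{j\geq1}(\cdots)e^{-j\partial_s}$, so the left-hand expression for $M(s)$ has leading coefficient $1$ and contains only powers $e^{-j\partial_s}$ with $j\geq0$. In contrast, $\hat{W}^{(0)}(s)$ and its inverse \eqref{def:WinftyW0_inverse} carry only nonnegative powers of $e^{\partial_s}$, so sandwiching $1-a_ne^{-\partial_s}$ as on the right can push the order down to $-1$ at most; hence the right-hand expression involves only powers $e^{j\partial_s}$ with $j\geq-1$. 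Intersecting the two constraints forces $M(s)=1+\mu(s)e^{-\partial_s}$ with a single unknown coefficient $\mu(s)$.

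The third step is to read off $\mu(s)$ from the right-hand expression, where the $e^{-\partial_s}$ contribution can only come from the product of the lowest pieces: the $j=0$ term $\hat{w}^{(0)}_0(s;m,n+1)$, the term $-a_ne^{-\partial_s}$, and the $j=0$ term $\hat{w}^{(0)*}_0(s+1;m,n)$ of the inverse in \eqref{def:WinftyW0_inverse}. Commuting the coefficient past the shift yields
\[
\mu(s)=-a_n\,\hat{w}^{(0)}_0(s;m,n+1)\,\hat{w}^{(0)*}_0(s;m,n).
\]
Inserting the $\tau$-function formulas $\hat{w}^{(0)}_0(s)=\tau(s+1)/\tau(s)$ and $\hat{w}^{(0)*}_0(s)=\tau(s-1)/\tau(s)$ from \eqref{def:tau-function} gives exactly $\mu(s)=-a_n\tilde{u}(s;m,n)$ with $\tilde{u}$ as in \eqref{def:tildeU_tildeV}. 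Thus $M(s)=1-a_n\tilde{u}(s;m,n)e^{-\partial_s}$, and multiplying the two displayed forms of $M(s)$ on the right by $\hat{W}^{(\infty)}(s;m,n)$ and by $\hat{W}^{(0)}(s;m,n)$ respectively produces the first and second asserted relations. The same computation with $y'=y-[b_m]$ produces the third and fourth relations, with $a_n$, $\tilde{u}$, $n$ replaced by $b_m$, $\tilde{v}$, $m$.

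I expect the degree count in the second step to be the real content: everything hinges on the opposite triangularities of the two dressing operators — lower triangular with unit diagonal for $\hat{W}^{(\infty)}$, upper triangular for $\hat{W}^{(0)}$ — which together confine $M(s)$ to just the two terms $1$ and $e^{-\partial_s}$. Once that structural restriction is in place, the remaining work is the routine extraction of a single coefficient and its rewriting via \eqref{def:tau-function}.
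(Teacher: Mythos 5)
Your proposal is correct and takes essentially the same route as the paper: specialize \eqref{WexpW=WexpW} with $x'=x$ and the Miwa shift $y'=y-[a_n]$ (resp.\ $y-[b_m]$) so the exponential collapses to $1-a_ne^{-\partial_s}$, use the opposite triangularities of $\hat{W}^{(\infty)}$ and $\hat{W}^{(0)}$ to confine the resulting operator to the form $\tilde{c}_0+\tilde{c}_{-1}e^{-\partial_s}$, and then identify the coefficients. Your write-up is merely more explicit than the paper's (you fix $\tilde{c}_0=1$ from the normalization of $\hat{W}^{(\infty)}$ and read off $\tilde{c}_{-1}=-a_n\tilde{u}(s;m,n)$ from the lowest-order terms of the $\hat{W}^{(0)}$ side via \eqref{def:tau-function}), which the paper leaves as an implicit insertion of \eqref{def:WinftyW0} into \eqref{WexpW_discrete_n}.
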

\begin{proof}
Setting $x'_k=x_k$, $y'_k=\tilde{y}(m,n+1)$, 
$y_k=\tilde{y}(m,n)$ ($k=1,2,\ldots$) in \eqref{WexpW=WexpW}, 
we have
\begin{equation}
\begin{aligned}
&\hat{W}^{(\infty)}(s;m,n+1)\hat{W}^{(\infty)}(s;m,n)^{-1}\\
&\qquad =
\hat{W}^{(0)}(s;m,n+1)
\left(1-a_ne^{-\partial_s}\right)
\hat{W}^{(0)}(s;m,n)^{-1},
\end{aligned}
\label{WexpW_discrete_n}
\end{equation}
where we have used the formula 
$\exp\!\left(-\sum_{n=0}^{\infty}z^n/n\right)=1-z$.
Since the left-hand side of \eqref{WexpW_discrete_n} is of non-positive order
with respect to $e^{\partial_s}$, it follows that 
it is of the form
\begin{equation}
\eqref{WexpW_discrete_n} = 
\tilde{c}_0(s;m,n)+\tilde{c}_{-1}(s;m,n)e^{-\partial_s}.
\label{explicitForm_WexpW_discrete_n}
\end{equation}
Inserting $\hat{W}^{(\infty)}$ and $\hat{W}^{(0)}$ of 
\eqref{def:WinftyW0} to \eqref{WexpW_discrete_n} with 
\eqref{explicitForm_WexpW_discrete_n}, we obtain the 
first and the second equation of \eqref{eq:discreteSato}. 
The third and the fourth can be obtained in the same fashion.
\end{proof}
\noindent\textbf{Remark:} 
Tsujimoto \cite{Tsujimoto} proposed and investigated the equations 
\eqref{eq:discreteSato} as a discrete analogue of \eqref{SatoEqs}. 
In our approach, the results in \cite{Tsujimoto} 
can be obtained directly from \eqref{WexpW=WexpW} with 
the Miwa transformation.
\medskip

Hereafter in this section, we impose the 2-reduction condition 
$\tau(s+2;m,n)=\tau(s;m,n)$. 
{}From the first and the third equations of \eqref{eq:discreteSato}, 
we obtain
\begin{equation}
\begin{aligned}
\hat{w}^{(\infty)}_1(s;m,n+1)
&=\hat{w}^{(\infty)}_1(s;m,n)-a_n\tilde{u}(s;m,n),
\\
\hat{w}^{(\infty)}_1(s;m+1,n)&=
\hat{w}^{(\infty)}_1(s;m,n)-b_m\tilde{v}(s;m,n).
\end{aligned}
\label{discreteEqs_w1}
\end{equation}
It follows that
\begin{equation}
\begin{aligned}
\lefteqn{\frac{\hat{w}^{(\infty)}_1(s;m,n+1)-
\hat{w}^{(\infty)}_1(s;m,n)}{a_n}}\qquad\\
&=\mathcal{K}(s;m,n)
\frac{\hat{w}^{(\infty)}_1(s;m,n)-
\hat{w}^{(\infty)}_1(s;m,n-1)}{a_{n-1}},
\\[2mm]
\lefteqn{\frac{\hat{w}^{(\infty)}_1(s;m+1,n)-
\hat{w}^{(\infty)}_1(s;m,n)}{b_m}}\qquad\\
&=\mathcal{W}(s;m,n)
\frac{\hat{w}^{(\infty)}_1(s;m,n+1)-
\hat{w}^{(\infty)}_1(s;m,n)}{a_n},
\end{aligned}
\label{w1_K_W}
\end{equation}
with
\begin{equation}
\begin{aligned}
\mathcal{K}(s;m,n)&=\frac{\tilde{u}(s;m,n)}{\tilde{u}(s;m,n-1)}
=\frac{\tau(s+1;m,n+1)\tau(s;m,n-1)}{\tau(s;m,n+1)\tau(s+1;m,n-1)},
\\
\mathcal{W}(s;m,n)&=\frac{\tilde{v}(s;m,n)}{\tilde{u}(s;m,n)}
=\frac{\tau(s+1;m+1,n)\tau(s;m,n+1)}{\tau(s;m+1,n)\tau(s+1;m,n+1)}.
\end{aligned}
\end{equation}
If we introduce $\Theta(s;m,n)$ as 
\begin{equation}
\Theta(s;m,n)= \tau(s+1;m,n)/\tau(s;m,n),
\end{equation}
then $\mathcal{K}(s;m,n)$ and $\mathcal{W}(s;m,n)$ are written as
\begin{equation}
\mathcal{K}(s;m,n)=\frac{\Theta(s;m,n+1)}{\Theta(s;m,n-1)},\quad
\mathcal{W}(s;m,n)=\frac{\Theta(s;m+1,n)}{\Theta(s;m,n+1)}.
\label{K_W_Theta}
\end{equation}

We furthermore impose the reality condition \eqref{reality_condition}.
Under the condition, $\Theta(s;m,n)$ satisfies 
$\left|\Theta(s;m,n)\right|=1$ and one can set 
\begin{equation}
e^{\sqrt{-1}\theta^m_n}=\Theta(s=0;m,n)=\tau(1;m,n)/\tau(0;m,n).
\label{theta_Theta}
\end{equation}
\begin{thm}[Representation formula for discrete curves 
in terms of the $\tau$-functions]
If we set 
\begin{equation}
\begin{aligned}
\gamma^m_n &= \hat{w}^{(\infty)}_1(s=0;m,n)
=-\frac{\partial}{\partial x_1}\log\tau(0;m,n),\\
\theta^m_n &= \frac{1}{\sqrt{-1}}\log\Theta(s=0;m,n)
= \frac{1}{\sqrt{-1}}\log\frac{\tau(1;m,n)}{\tau(0;m,n)},
\end{aligned}
\end{equation}
then $\gamma^m_n$ and $\theta^m_n$ solve the equations 
\eqref{eq:dmKdVmotion} and \eqref{def:theta^m_n}.
\end{thm}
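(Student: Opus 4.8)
The plan is to obtain all three relations in \eqref{eq:dmKdVmotion} together with \eqref{def:theta^m_n} by specializing to $s=0$ the difference relations already derived for $\hat{w}^{(\infty)}_1$, working under the $2$-reduction condition $\tau(s+2;m,n)=\tau(s;m,n)$ and the reality condition \eqref{reality_condition}, and using the identifications $\gamma^m_n=\hat{w}^{(\infty)}_1(0;m,n)$ together with $\Theta(0;m,n)=e^{\sqrt{-1}\theta^m_n}$ from \eqref{theta_Theta}. First I would record, from \eqref{discreteEqs_w1}, the two forward differences $\gamma^m_{n+1}-\gamma^m_n=-a_n\tilde{u}(0;m,n)$ and $\gamma^{m+1}_n-\gamma^m_n=-b_m\tilde{v}(0;m,n)$, and then rewrite $\tilde{u},\tilde{v}$ at $s=0$ via \eqref{def:tildeU_tildeV} and $2$-reduction ($\tau(-1)=\tau(1)$), so that $\tilde{u}(0;m,n)=\Theta(0;m,n)\Theta(0;m,n+1)$ and $\tilde{v}(0;m,n)=\Theta(0;m,n)\Theta(0;m+1,n)$.

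For the second and third equations of \eqref{eq:dmKdVmotion}, the essential content is already packaged in \eqref{w1_K_W}: setting $s=0$ there and substituting $\gamma^m_n=\hat{w}^{(\infty)}_1(0;m,n)$ turns the two lines of \eqref{w1_K_W} into precisely the frame relations of \eqref{eq:dmKdVmotion}, with the identifications $e^{\sqrt{-1}K^m_n}=\mathcal{K}(0;m,n)$ and $e^{\sqrt{-1}W^m_n}=\mathcal{W}(0;m,n)$. I would then invoke \eqref{K_W_Theta} and \eqref{theta_Theta} to evaluate these rotation factors as ratios of $\Theta(0;\cdot)=e^{\sqrt{-1}\theta}$, namely $\mathcal{K}(0;m,n)=e^{\sqrt{-1}(\theta^m_{n+1}-\theta^m_{n-1})}$ and $\mathcal{W}(0;m,n)=e^{\sqrt{-1}(\theta^{m+1}_n-\theta^m_{n+1})}$; passing to a consistently chosen branch of the logarithm then yields relations of exactly the shape of \eqref{def:theta^m_n}, up to an overall normalization of the turning angle that must be pinned down.

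It remains to establish the isoperimetric (first) equation of \eqref{eq:dmKdVmotion}. Here I would use the reality condition \eqref{reality_condition}, which gives $\tau(s+1)=\overline{\tau(s)}$ and hence $|\Theta(0;m,n)|=\bigl|\overline{\tau(0;m,n)}/\tau(0;m,n)\bigr|=1$; combined with $\gamma^m_{n+1}-\gamma^m_n=-a_n\Theta(0;m,n)\Theta(0;m,n+1)$ this shows $\bigl|(\gamma^m_{n+1}-\gamma^m_n)/a_n\bigr|=1$. The main obstacle I anticipate is not analytic but a matter of careful bookkeeping: reconciling the explicit rotation factors $\mathcal{K}(0;m,n),\mathcal{W}(0;m,n)$ — which carry an overall sign from $T^m_n=-\tilde{u}(0;m,n)$ and are $2\pi$-ambiguous — with the precise normalization of $K^m_n,W^m_n$ demanded by \eqref{def:theta^m_n}. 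I would therefore fix the branch of $\theta^m_n=\tfrac{1}{\sqrt{-1}}\log\Theta(0;m,n)$ once and for all, track the half-angle and sign conventions relating the geometric bending and rotation angles to the potential $\theta^m_n$, and verify that the discrete compatibility (zero-curvature) of the two frame relations reproduces the discrete potential mKdV equation, so that the chosen normalization is internally consistent.
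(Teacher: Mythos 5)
Your proposal follows essentially the same route as the paper's own proof: the isoperimetric condition is obtained from \eqref{discreteEqs_w1}, \eqref{def:tildeU_tildeV} and the reality condition \eqref{reality_condition} (the modulus of the difference quotient being a ratio of $\tau$'s of absolute value one), and the frame relations of \eqref{eq:dmKdVmotion} together with \eqref{def:theta^m_n} are read off from \eqref{w1_K_W}, \eqref{K_W_Theta} and \eqref{theta_Theta}. The only difference is that you make explicit the sign/half-angle normalization bookkeeping for $K^m_n$, $W^m_n$, which the paper passes over with ``follow directly.''
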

\begin{proof}
{}From \eqref{discreteEqs_w1} and \eqref{def:tildeU_tildeV}, 
it follows that 
\begin{equation}
\begin{aligned}
&\left|\frac{\hat{w}^{(\infty)}_1(s;m,n+1)-
\hat{w}^{(\infty)}_1(s;m,n)}{a_n}\right|
\\
&\qquad 
=\left|
\frac{\tau(s-1;m,n)\tau(s+1;m,n+1)}{\tau(s;m,n)\tau(s;m,n+1)}
\right|=1
\end{aligned}
\end{equation}
under the condition \eqref{reality_condition}. 
This is equivalent to the first equation of \eqref{eq:dmKdVmotion}. 
The remaining equations follow directly from 
\eqref{w1_K_W}, \eqref{K_W_Theta} and \eqref{theta_Theta}.
\end{proof}

\section{Fermionic construction of $\tau$-functions}
In \cite{Takebe1,Takebe2}, Takebe described $\tau$-functions for 
the Toda hierarchy as expectation values of fermionic operators
(See also \cite{TakasakiBook}). 
We firstly recall the definition of charged free fermions 
\cite{JimboMiwa,MJD}.

Let $\mathcal{A}$ be an associative unital $\mathbb{C}$-algebra 
generated by $\psi_i$, $\psi^*_i$ ($i\in\mathbb{Z}$) 
satisfying the relations
\begin{equation}
\psi_i\psi^*_j+\psi^*_j\psi_i=\delta_{ij},\quad
\psi_i\psi_j+\psi_j\psi_i=
\psi^*_i\psi^*_j+\psi^*_j\psi^*_i=0.
\end{equation}
We consider a class of infinite matrices
$A=\left[a_{ij}\right]_{i,j\in\mathbb{Z}}$ that satisfies the 
following condition:
\begin{equation}
\mbox{there exists }N>0\mbox{ such that }a_{ij}=0
\mbox{ for all }i,j\mbox{ with }|i-j|>N.
\label{gl_infty_condition}
\end{equation}
Define the Lie algebra $\mathfrak{gl}(\infty)$ as \cite{JimboMiwa}
\begin{equation}
\mathfrak{gl}(\infty)=
\left\{
\sum_{i,j\in\mathbb{Z}}a_{ij} :\!\psi_i\psi^*_j\!:\Bigg|
\;A=\left[a_{ij}\right]_{i,j\in\mathbb{Z}}
\mbox{ satisfies \eqref{gl_infty_condition}}
\right\}\oplus\mathbb{C}
\end{equation}
where $:\cdot :$ indicates the normal ordering
\begin{equation}
\begin{aligned}
:\!\psi_i\psi^*_j\!: \, = 
\begin{cases}
\psi_i\psi^*_j & \mbox{if $i\neq j$ or $i=j\geq 0$},\\
-\psi^*_j\psi_i & \mbox{if $i=j<0$}.
\end{cases}
\end{aligned}
\end{equation}
We also define the group $\mathbf{G}$ corresponds to
$\mathfrak{gl}(\infty)$ to be
\begin{equation}
\mathbf{G}=\left\{
e^{X_1}e^{X_2}\dots e^{X_k}\;\big|\; 
X_i\in\mathfrak{gl}(\infty)\right\}.
\end{equation}

Consider a left $\mathcal{A}$-module with a cyclic vector
$|\mathrm{vac}\rangle$ satisfying
\begin{equation}
\psi_j|\mathrm{vac}\rangle=0 \quad (j<0),\quad
\psi^*_k|\mathrm{vac}\rangle=0 \quad (k\geq 0).
\end{equation}
The $\mathcal{A}$-module $\mathcal{A}|\mathrm{vac}\rangle$ is called 
the fermion Fock space $\mathcal{F}$, which we denote $\mathcal{F}$. 
We also consider a right $\mathcal{A}$-module (the dual Fock space 
$\mathcal{F}^*$) with a cyclic vector $\langle\mathrm{vac}|$ satisfying
\begin{equation}
\langle\mathrm{vac}|\psi_j=0 \quad (j\geq 0), \quad
\langle\mathrm{vac}|\psi^*_k=0 \quad (k<0).
\end{equation}
We further define the generalized vacuum vectors 
$|s\rangle$, $\langle s|$ ($s\in\mathbb{Z}$) as 
\begin{equation}
\begin{aligned}
|s\rangle &= 
\begin{cases}
\psi^*_s\cdots\psi^*_{-1}|\mathrm{vac}\rangle & \mbox{for }s<0,\\
|\mathrm{vac}\rangle & \mbox{for }s=0,\\
\psi_{s-1}\cdots\psi_{0}|\mathrm{vac}\rangle & \mbox{for }s>0,
\end{cases}
\\
\langle s| &= 
\begin{cases}
\langle\mathrm{vac}|\psi_{-1}\cdots\psi_s & \mbox{for }s<0,\\
\langle\mathrm{vac}| & \mbox{for }s=0,\\
\langle\mathrm{vac}|\psi^*_0\cdots\psi^*_{s-1} & \mbox{for }s>0.
\end{cases}
\end{aligned}
\end{equation}

There exists a unique linear map (the vacuum expectation value) 
$\mathcal{F}^*\otimes_{\mathcal{A}}\mathcal{F}\to\mathbb{C}$ such that
$\langle\mathrm{vac}|\otimes |\mathrm{vac}\rangle \mapsto 1$
For $a\in\mathcal{A}$ we denote by 
$\langle\mathrm{vac}|a|\mathrm{vac}\rangle$
the vacuum expectation value of the vector
$\langle\mathrm{vac}|a\otimes |\mathrm{vac}\rangle=
\langle\mathrm{vac}|\otimes a|\mathrm{vac}\rangle$
in $\mathcal{F}^*\otimes_{\mathcal{A}}\mathcal{F}$.

\begin{thm}[\cite{Takebe1} \S 2, \cite{Takebe2} \S 2]
For $s\in\mathbb{Z}$ and $g\in\mathbf{G}$, define 
$\tau_g(s;x,y)$ as
\begin{equation}
\tau_g(s;x,y)=
\langle s|e^{H(x)}ge^{-\bar{H}(y)}|s\rangle, \quad
\label{fermion_tau}
\end{equation}
where
\begin{equation}
H(x)=\sum_{n=1}^{\infty}x_n\sum_{j\in\mathbb{Z}}\psi_j\psi^*_{j+n}, \quad
\bar{H}(y)=\sum_{n=1}^{\infty}y_n
\sum_{j\in\mathbb{Z}}\psi_{j+n}\psi^*_j.
\end{equation}
Then $\tau_g(s;x,y)$ satisfies the bilinear identity \eqref{BilinearIdentity}.
\end{thm}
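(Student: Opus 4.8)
The plan is to deduce the bilinear identity \eqref{BilinearIdentity} from the single algebraic property that characterizes the group $\mathbf{G}$ at the fermionic level, following Takebe. The cornerstone is the relation that the Casimir-type operator $\sum_{j\in\mathbb{Z}}\psi_j\otimes\psi^*_j$ on $\mathcal{F}\otimes\mathcal{F}$ commutes with the diagonal action of $g$, i.e.
\begin{equation}
\sum_{j\in\mathbb{Z}}\psi_j g\otimes\psi^*_j g
=\sum_{j\in\mathbb{Z}}g\psi_j\otimes g\psi^*_j ,
\qquad g\in\mathbf{G}.
\label{plan:fbil}
\end{equation}
First I would prove \eqref{plan:fbil}. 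Since $g=e^{X_1}\cdots e^{X_k}$ with $X_i\in\mathfrak{gl}(\infty)$, it suffices to check that $\sum_j\psi_j\otimes\psi^*_j$ commutes with $X\otimes 1+1\otimes X$ for a generator $X=\,:\!\psi_a\psi^*_b\!:$, which is a short computation using the anticommutation relations; the band condition \eqref{gl_infty_condition} guarantees that only finitely many terms contribute to each matrix entry, so the rearrangement is legitimate, and exponentiation then yields \eqref{plan:fbil}.

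Next I would introduce the fermion fields $\psi(\lambda)=\sum_{j}\psi_j\lambda^{j}$ and $\psi^*(\lambda)=\sum_{j}\psi^*_j\lambda^{-j-1}$, chosen so that $\oint\psi(\lambda)\otimes\psi^*(\lambda)\,d\lambda=2\pi\sqrt{-1}\sum_{j}\psi_j\otimes\psi^*_j$; thus the summation over $j$ in \eqref{plan:fbil} becomes extraction of a formal residue. I would then sandwich \eqref{plan:fbil} by applying $\langle s'+1|e^{H(x')}\,\cdot\,e^{-\bar{H}(y')}|s'\rangle$ to the first tensor factor and $\langle s-1|e^{H(x)}\,\cdot\,e^{-\bar{H}(y)}|s\rangle$ to the second. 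The charge bookkeeping is consistent on both sides: $\psi$ raises the level by one and $\psi^*$ lowers it by one, matching the bras $\langle s'+1|$ and $\langle s-1|$ in both arrangements of $g$.

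The core of the argument is the boson--fermion dictionary, which I would establish as four vertex-operator lemmas:
\begin{equation}
\begin{aligned}
\langle s+1|e^{H(x)}\psi(\lambda)
&=\lambda^{s}e^{\xi(x,\lambda)}\langle s|e^{H(x-[\lambda^{-1}])},\\
\langle s-1|e^{H(x)}\psi^*(\lambda)
&=\lambda^{-s}e^{-\xi(x,\lambda)}\langle s|e^{H(x+[\lambda^{-1}])},\\
\psi(\lambda)e^{-\bar{H}(y)}|s\rangle
&=\lambda^{s}e^{\xi(y,\lambda^{-1})}e^{-\bar{H}(y-[\lambda])}|s+1\rangle,\\
\psi^*(\lambda)e^{-\bar{H}(y)}|s\rangle
&=\lambda^{-s}e^{-\xi(y,\lambda^{-1})}e^{-\bar{H}(y+[\lambda])}|s-1\rangle,
\end{aligned}
\label{plan:vertex}
\end{equation}
where the exponential prefactors come from $[H(x),\psi(\lambda)]=\xi(x,\lambda)\psi(\lambda)$ and $[\bar{H}(y),\psi(\lambda)]=\xi(y,\lambda^{-1})\psi(\lambda)$ (and the conjugate relations for $\psi^*$), the powers of $\lambda$ from the action on the charged vacua, and the Miwa shifts from the standard bosonization of the shift operators. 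Applying the first two lines of \eqref{plan:vertex} to the left-hand side of the sandwiched \eqref{plan:fbil} converts it, after taking the residue, into $\oint\lambda^{s'-s}e^{\xi(x'-x,\lambda)}\tau_g(s';x'-[\lambda^{-1}],y')\tau_g(s;x+[\lambda^{-1}],y)\,d\lambda$, which is the left line of \eqref{BilinearIdentity}; applying the last two lines to the right-hand side produces $\oint\lambda^{s'-s}e^{\xi(y'-y,\lambda^{-1})}\tau_g(s'+1;x',y'-[\lambda])\tau_g(s-1;x,y+[\lambda])\,d\lambda$, the right line of \eqref{BilinearIdentity}.

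The step I expect to be the main obstacle is establishing the vertex-operator lemmas \eqref{plan:vertex} cleanly and then keeping all the data straight when collecting the two residues: pairing each Miwa shift ($\pm[\lambda^{-1}]$ on the $x$-side, $\pm[\lambda]$ on the $y$-side) with its correct exponential factor and its power of $\lambda$, and verifying that the charge labels combine to the single prefactor $\lambda^{s'-s}$ rather than some split between the two factors. The difference $\xi(x',\lambda)-\xi(x,\lambda)=\xi(x'-x,\lambda)$ (and likewise for $y$) only emerges once the primed and unprimed dressings are distributed correctly, and one must confirm that the formal residue $\oint$ is taken with the same orientation on both sides so that the equality of the sandwiched expressions is literally the equality \eqref{BilinearIdentity}. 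Once these identifications are in place the theorem follows; the converse implication, that solutions of \eqref{BilinearIdentity} satisfy the Sato equations \eqref{SatoEqs}, is already furnished by the earlier theorem.
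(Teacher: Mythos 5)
The paper itself gives no proof of this theorem: it is stated as a quoted result from Takebe (\S 2 of both cited papers), within the Ueno--Takasaki/DJKM framework. Your proposal reconstructs exactly the argument those references use: the invariance relation $\sum_j\psi_jg\otimes\psi_j^*g=\sum_jg\psi_j\otimes g\psi_j^*$ (checked on Lie-algebra generators and exponentiated), sandwiching between $\langle s'+1|e^{H(x')}\cdots e^{-\bar{H}(y')}|s'\rangle$ and $\langle s-1|e^{H(x)}\cdots e^{-\bar{H}(y)}|s\rangle$, and conversion of the mode sum into a formal residue via the bosonization formulas. Your bookkeeping is consistent: with $\psi^*(\lambda)=\sum_j\psi^*_j\lambda^{-j-1}$ the diagonal sum is the residue of $\psi(\lambda)\otimes\psi^*(\lambda)$, the leading modes give $\langle s'+1|\psi(\lambda)\sim\lambda^{s'}\langle s'|$ and $\langle s-1|\psi^*(\lambda)\sim\lambda^{-s}\langle s|$ so the two charge prefactors combine to the single $\lambda^{s'-s}$ of \eqref{BilinearIdentity}, and the exponential factors pair as $e^{\xi(x',\lambda)}e^{-\xi(x,\lambda)}=e^{\xi(x'-x,\lambda)}$ on the $x$-side and $e^{\xi(y',\lambda^{-1})}e^{-\xi(y,\lambda^{-1})}=e^{\xi(y'-y,\lambda^{-1})}$ on the $y$-side, exactly as needed. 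The only point to flag is notational: your generating series $\psi^*(\lambda)$ carries the extra shift $\lambda^{-j-1}$, whereas the paper's soliton construction \eqref{Nsoliton-tau} uses $\psi^*(q)=\sum_j\psi^*_jq^{-j}$; this is harmless since your convention is internal to the proof, but it should be stated explicitly to avoid a clash if the two are ever used together.
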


We introduce an automorphism $\iota_l$ of $\mathcal{A}$ by
\begin{equation}
\iota_l(\psi_i)=\psi_{i-l},\quad
\iota_l(\psi^*_i)=\psi^*_{i-l}, 
\end{equation}
which satisfies
\begin{equation}
\langle s'|a|s\rangle=\langle s'-l|\iota_l(a)|s-l\rangle
\label{<s|iota(a)|s>}
\end{equation}
for any $s,s',l$ and any $a\in\mathcal{A}$.
\begin{prop}
If $g\in\mathbf{G}$ satisfies
\begin{equation}
\iota_1(g) = \overline{g},
\label{realityCondition_fermions}
\end{equation}
then the $\tau$-function corresponds to $g$ gives a solution 
of the Goldstein-Petrich hierarchy.
\end{prop}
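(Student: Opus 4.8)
The plan is to reduce the statement to the representation-formula theorem of Section~4.1, whose hypotheses are that the $\tau$-function solve the $2$-reduced Toda lattice hierarchy and satisfy the reality condition \eqref{reality_condition}. By the theorem of Takebe recalled above \cite{Takebe1,Takebe2}, the function $\tau_g(s;x,y)$ of \eqref{fermion_tau} already satisfies the bilinear identity \eqref{BilinearIdentity}; hence, by the converse statement in the bilinear-identity theorem (\cite{UenoTakasaki}, Theorem~1.11), the operators $\hat{W}^{(\infty)}$, $\hat{W}^{(0)}$ built from $\tau_g$ obey the Sato equations \eqref{SatoEqs}, so $\tau_g$ is a genuine Toda $\tau$-function. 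Working within the $2$-reduced setting \eqref{def:2-reduction} of this section, the only remaining ingredient is the reality condition \eqref{reality_condition}; once it is in force, the representation formula delivers the Goldstein--Petrich data $Z,T,\kappa,f^{(n)},g^{(n)}$. So the entire content to be proved is that $\iota_1(g)=\overline{g}$ forces \eqref{reality_condition}.

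The core of the argument is the chain of identities
\[
\overline{\tau_g(s;x,y)}=\tau_{\overline{g}}(s;x,y)=\tau_{\iota_1(g)}(s;x,y)=\tau_g(s+1;x,y)
\]
for real times $x,y$. The last equality is purely formal and I would establish it first. Since $H(x)=\sum_{n\ge1}x_n\sum_{j}\psi_j\psi^*_{j+n}$ and $\bar{H}(y)=\sum_{n\ge1}y_n\sum_{j}\psi_{j+n}\psi^*_j$ are sums over all $j\in\mathbb{Z}$ of off-diagonal terms, they are fixed by the index shift, $\iota_1(H(x))=H(x)$ and $\iota_1(\bar{H}(y))=\bar{H}(y)$. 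Applying \eqref{<s|iota(a)|s>} with $l=1$ and using that $\iota_1$ is an algebra automorphism gives $\tau_g(s+1;x,y)=\langle s|\iota_1\!\left(e^{H(x)}g\,e^{-\bar{H}(y)}\right)|s\rangle=\langle s|e^{H(x)}\iota_1(g)\,e^{-\bar{H}(y)}|s\rangle=\tau_{\iota_1(g)}(s;x,y)$, and then the hypothesis $\iota_1(g)=\overline{g}$ turns this into $\tau_{\overline g}(s;x,y)$.

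For the first equality I would fix the complex conjugation on $\mathcal{A}$ to be the antilinear automorphism fixing every generator $\psi_i,\psi^*_i$; on $g=e^{X_1}\cdots e^{X_k}$ this conjugates the entries of the defining matrices, which is precisely the $\overline{g}$ of \eqref{realityCondition_fermions}. The key point is that the matrix element $\langle s|(\text{monomial})|s\rangle$ of any monomial in the $\psi,\psi^*$ is an integer, so $\overline{\langle s|M|s\rangle}=\langle s|\overline{M}|s\rangle$ for every $M\in\mathcal{A}$ (note that an antilinear \emph{anti}-automorphism would instead flip signs through the anticommutation relations and fail here). Because $x,y$ are real we have $\overline{H(x)}=H(x)$ and $\overline{\bar{H}(y)}=\bar{H}(y)$, and the automorphism property gives $\overline{e^{H(x)}g\,e^{-\bar{H}(y)}}=e^{H(x)}\overline{g}\,e^{-\bar{H}(y)}$, whence $\overline{\tau_g(s;x,y)}=\tau_{\overline g}(s;x,y)$. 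Chaining the three identities yields $\overline{\tau_g(s)}=\tau_g(s+1)$, which is exactly \eqref{reality_condition}, and the representation formula then finishes the proof.

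The formal shift identity and the reduction to the representation formula are routine. The step that requires genuine care, and which I expect to be the main obstacle, is making the complex-conjugation structure precise: identifying the antilinear involution on $\mathcal{A}$ with the conjugation $g\mapsto\overline{g}$ of \eqref{realityCondition_fermions}, verifying that the generalized vacua $|s\rangle,\langle s|$ are fixed by it so that conjugation may legitimately be passed through the vacuum expectation value, and checking that this involution preserves $\mathbf{G}$ and is compatible with the $2$-reduction, so that $\iota_1(g)=\overline{g}$ is a self-consistent constraint. Once the reality of $H(x)$, $\bar{H}(y)$ and of the vacua is settled, the remainder is bookkeeping.
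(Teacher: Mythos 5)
Your proposal is correct and takes essentially the same route as the paper: the paper's own proof is the one-line observation that the reality condition \eqref{reality_condition} follows from the shift identity \eqref{<s|iota(a)|s>} together with the hypothesis \eqref{realityCondition_fermions}, which is precisely your chain $\overline{\tau_g(s)}=\tau_{\overline{g}}(s)=\tau_{\iota_1(g)}(s)=\tau_g(s+1)$. You merely make explicit the details the paper leaves implicit ($\iota_1$-invariance of $H(x)$ and $\bar{H}(y)$, and compatibility of complex conjugation with the vacuum expectation value).
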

\begin{proof}
{}From \eqref{<s|iota(a)|s>} and \eqref{realityCondition_fermions}, 
it is clear that \eqref{reality_condition} holds.
\end{proof}

To construct soliton-type solutions, we choose $g$ as 
\begin{equation}
\begin{aligned}
& g_N(\{c_j\},\{p_j\},\{q_j\})=\prod_{j=1}^N
 e^{c_i \psi(p_i)\psi^*(q_i)},\\
& \psi(p)=\sum_{j\in\mathbb{Z}}\psi_jp^j,
\quad \psi^*(q)=\sum_{j\in\mathbb{Z}}\psi^*_jq^{-j}.
\end{aligned}
\label{Nsoliton-tau}
\end{equation}
We remark that the vacuum expectation value of 
$e^{c\psi(p)\psi^*(q)}$
makes sense even when
$X=c\psi(p)\psi^*(q)$  does not satisfy the condition
\eqref{gl_infty_condition}:
\begin{equation}
\langle s|e^{c\psi(p)\psi^*(q)}|s\rangle
=\langle s|\left\{1+c\psi(p)\psi^*(q)\right\}
|s\rangle
=1+\left(\frac{p}{q}\right)^s\frac{cq}{p-q}.
\end{equation}

We consider the following two types of conditions for the parameters in
\eqref{Nsoliton-tau}:
\begin{itemize}
\item[A.] (Soliton solutions)
\begin{equation}
c_j\in\sqrt{-1}\mathbb{R}, \quad p_j\in\mathbb{R}, \quad
q_j=-p_j \quad (j=1,2,\dots,N),
\label{ConditionA}
\end{equation}
\item[B.] (Breather solutions)
\begin{equation}
\begin{aligned}
& N=2M, \quad \overline{c_{2k-1}}=-c_{2k}, \quad 
\overline{p_{2k-1}}=p_{2k} \quad (k=1,2,\dots,M),\\
& q_j=-p_j \quad (j=1,2,\dots,N).
\end{aligned}
\label{ConditionB}
\end{equation}
\end{itemize}
An straightforward calculation shows that 
$g_N\left(\{c_j\},\{p_j\},\{q_j\}\right)$ satisfies 
\eqref{realityCondition_fermions} under each of the conditions 
\eqref{ConditionA}, \eqref{ConditionB}.
The $\tau$-functions under these conditions provide 
the solutions given in \cite{IKMO1,IKMO2}.

We now consider Lie algebraic meaning of the condition 
\eqref{realityCondition_fermions}. We recall the facts about 
a fermionic representation of the affine Lie algebra
$\widehat{\mathfrak{sl}}(2,\mathbb{C})$. 
The affine Lie algebra
$\widehat{\mathfrak{sl}}(2,\mathbb{C})$ 
is generated by the Chevalley generators 
$\left\{e_0,e_1,f_0,f_1,h_0,h_1\right\}$ 
that satisfy
\begin{equation}
\begin{aligned}
&\left[h_i,h_j\right]=0,\quad 
\left[e_i,f_j\right]=\delta_{ij}h_i \mbox{ \ for all \ }i,j,\\
&\left[h_i,e_j\right]=
\begin{cases}
2e_j & \mbox{if \ }i=j,\\
-2e_j & \mbox{if \ }i\neq j,
\end{cases}\qquad
\left[h_i,f_j\right]=
\begin{cases}
-2e_j & \mbox{if \ }i=j,\\
2e_j & \mbox{if \ }i\neq j,
\end{cases}\\
& \left[e_i,\left[e_i,\left[e_i,e_j\right]\right]\right]=
\left[f_i,\left[f_i,\left[f_i,f_j\right]\right]\right]=0
\mbox{ \ if \ }i\neq j.
\end{aligned}
\end{equation}
Define a linear map 
$\pi:\widehat{\mathfrak{sl}}(2,\mathbb{C})\to\mathfrak{gl}(\infty)$ as 
\begin{equation}
\begin{aligned}
\pi\left(e_j\right) 
&= \sum_{n\equiv j\;\mathrm{mod}\;2}\psi_{n-1}\psi^*_{n},\quad
\pi\left(f_j\right)
 = \sum_{n\equiv j\;\mathrm{mod}\;2}\psi_{n}\psi^*_{n-1},\\
\pi\left(h_j\right) &= \sum_{n\equiv j\;\mathrm{mod}\;2}
\left(\,:\!\psi_{n-1}\psi^*_{n-1}\!:\, -
\,:\!\psi_{n}\psi^*_{n}\!:\, \right) + \delta_{j0} \quad (j=0,1).
\end{aligned}
\end{equation}
\begin{thm}[\cite{JimboMiwa,MJD}]
$\left(\pi,\mathcal{F}\right)$ is a representation of 
$\widehat{\mathfrak{sl}}(2,\mathbb{C})$.
\end{thm}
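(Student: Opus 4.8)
The plan is to check that $\pi$ maps the Chevalley generators $\{e_0,e_1,f_0,f_1,h_0,h_1\}$ to operators on $\mathcal{F}$ obeying the listed defining relations; since this is a finite list of brackets, the proof reduces to a finite sequence of commutator computations. The one tool needed is the exact commutation rule for fermion bilinears, which follows directly from the anticommutation relations defining $\mathcal{A}$:
\[
[\psi_a\psi^*_b,\,\psi_c\psi^*_d]=\delta_{bc}\,\psi_a\psi^*_d-\delta_{ad}\,\psi_c\psi^*_b .
\]
First I would note that, applied to any vector of $\mathcal{F}$, each of the defining sums for $\pi(e_j),\pi(f_j),\pi(h_j)$ has only finitely many nonzero terms, so these are genuine operators; for the diagonal $\pi(h_j)$ this is exactly what the normal ordering $:\!\psi\psi^*\!:$ is for, as it subtracts the otherwise divergent vacuum occupation.

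Next come the Cartan-type relations. Because constants are central and drop out of any commutator, the normal-ordering corrections hidden in $\pi(h_j)$ never affect a bracket, so I may compute $[\pi(h_i),\pi(h_j)]$, $[\pi(h_i),\pi(e_j)]$ and $[\pi(h_i),\pi(f_j)]$ by applying the displayed rule termwise. The first vanishes, since diagonal bilinears commute, and the other two reorganize into $\pm2\,\pi(e_j)$ and $\mp2\,\pi(f_j)$ with signs governed by whether $i=j$, exactly as prescribed by the Cartan matrix of $A_1^{(1)}$. No regularization intervenes here, because the output is again a locally finite off-diagonal sum.

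The relation $[\pi(e_i),\pi(f_j)]=\delta_{ij}\pi(h_i)$ is the crux and the step I expect to be the main obstacle. Applying the displayed rule gives $[\psi_{n-1}\psi^*_n,\psi_m\psi^*_{m-1}]=\delta_{nm}(\psi_{n-1}\psi^*_{n-1}-\psi_n\psi^*_n)$, so for $i\ne j$ the parity mismatch forces $\delta_{nm}=0$ and the bracket vanishes, while for $i=j$ one is left with the \emph{un-normal-ordered} infinite diagonal sum $\sum_{n\equiv i}(\psi_{n-1}\psi^*_{n-1}-\psi_n\psi^*_n)$. Rewriting each $\psi_k\psi^*_k$ as $:\!\psi_k\psi^*_k\!:$ plus the constant $1$ when $k<0$ and $0$ otherwise, the normal-ordered part is $\pi(h_i)$ minus its constant, and the sum of the corrections collapses to exactly $\delta_{i0}$. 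This is the anomaly of the fermionic representation: it reproduces precisely the $\delta_{j0}$ added to $\pi(h_j)$ in its definition, and it is the level-one central charge. Matching this bookkeeping against the chosen normalization is the delicate point.

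Finally, for the Serre relations $(\mathrm{ad}\,\pi(e_i))^3\pi(e_j)=(\mathrm{ad}\,\pi(f_i))^3\pi(f_j)=0$ ($i\ne j$) I would pass to the matrix picture $\pi(e_j)\leftrightarrow\hat e_j$, writing (with $E_{pq}$ the matrix units) $\hat e_i=\Lambda P_i=P_{i+1}\Lambda$, where $\Lambda=\sum_n E_{n-1,n}$ is the shift matrix and $P_j=\sum_{n\equiv j}E_{nn}$ are the parity projectors, so that $P_0+P_1=1$, $P_iP_j=\delta_{ij}P_i$ and $\Lambda P_i=P_{i+1}\Lambda$. Iterating the bracket moves the support onto $\Lambda^2$ and then $\Lambda^3$, and the third commutator collapses to a multiple of $\Lambda^4P_{i+1}P_i=0$; since every intermediate matrix is strictly triangular, no diagonal term — and hence no normal-ordering anomaly — can arise, so the relation holds on the nose (the $f$-case being identical with $\Lambda$ replaced by its transpose). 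Collecting the four families of computations shows that $\pi$ respects every defining relation, which is exactly the assertion that $(\pi,\mathcal{F})$ is a representation of $\widehat{\mathfrak{sl}}(2,\mathbb{C})$.
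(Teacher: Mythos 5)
The paper does not actually prove this theorem: it is quoted with citations to Jimbo--Miwa and the Miwa--Jimbo--Date book, so there is no internal proof to compare against. Your argument is correct and is essentially the standard one found in those references: it reduces everything to the bilinear commutator rule $[\psi_a\psi^*_b,\psi_c\psi^*_d]=\delta_{bc}\psi_a\psi^*_d-\delta_{ad}\psi_c\psi^*_b$, and the only genuinely delicate point --- the anomaly in $[\pi(e_i),\pi(f_i)]$ --- is handled correctly: rewriting $\psi_k\psi^*_k=\,:\!\psi_k\psi^*_k\!:+\,\theta(k<0)$, the constants $\theta(n-1<0)-\theta(n<0)$ survive only at $n=0$, which reproduces exactly the $\delta_{j0}$ built into $\pi(h_j)$ (and, summing $\pi(h_0)+\pi(h_1)$, the level-one central charge). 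Your observation that no cocycle term can appear in the Cartan and Serre computations because all intermediate bilinears are strictly off-diagonal is the right justification for working purely at the matrix level there, and the projector calculus $\hat e_i=\Lambda P_i=P_{i+1}\Lambda$, ending in $\Lambda^4P_{i+1}P_i=0$, is a clean way to dispatch the Serre relations; the $[\pi(h_i),\pi(h_j)]=0$ and $f$-side checks are only asserted, but they are routine and your stated reasons (diagonal bilinears commute termwise with no anomaly; transpose symmetry) are sound.
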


Note that $\iota_1$ works as an involutive automorphism:
\begin{equation}
\iota_1(e_0) = e_1, \quad \iota_1(f_0) = f_1, \quad
\iota_1(e_1) = e_0, \quad \iota_1(f_1) = f_0, 
\end{equation}
which defines a real form of $\widehat{\mathfrak{sl}}(2,\mathbb{C})$.
Kobayashi \cite{Z.Kobayashi} classified automorphisms of 
prime order of the affine Lie algebra
$\widehat{\mathfrak{sl}}(n,\mathbb{C})$. 
The involutive automorphism 
$\iota_1$ under consideration is labeled as (1a')-type
(\cite{Z.Kobayashi}, Theorem 3). 
We remark that the same real form of
$\widehat{\mathfrak{sl}}(2,\mathbb{C})$ appeared also in 
construction of solutions of a derivative nonlinear Sch\"odinger 
equation \cite{KIT}.

\def\theequation{A.\arabic{equation}}
\setcounter{equation}{0}
\section*{Appendix: Time-flows with positive weight}
So far, we have used the time-evolutions with respect to
the variables with negative weight $y=(y_1,y_2,\ldots)$ to derive the
Goldstein-Petrich hierarchy. 
In this appendix, we use $x=(x_1,x_2,\ldots)$ and show that 
the mKdV hierarchy can be obtained under the 2-reduction condition
\eqref{def:2-reduction}. 
Applying the condition \eqref{def:2-reduction}, one can show that 
\begin{equation}
\begin{aligned}
B_{2n-1}(s)&
=e^{(2n-1)\partial_s} + \sum_{-2(n-1)\leq j\leq 0}
 b_j(s)e^{(2n-2+j)\partial_s},
\\
B_{2n}(s)&= e^{2n\partial_s} \quad (n=1,2,\ldots).
\end{aligned}
\label{2-reduced_B}
\end{equation}
{}From \eqref{def:L} and \eqref{def:2-reduction}, we obtain
\begin{equation}
\begin{aligned}
& b_0(s+1)+b_0(s)=0,\\
& b_{-k-1}(s+1)+b_{-k-1}(s)
+\sum_{j=0}^k b_{-j}(s)b_{j-k}(s-j)=0
\quad (k=0,1,2,\ldots).
\end{aligned}
\end{equation}
Applying \eqref{2-reduced_B} to \eqref{def:TLhierarchy_x}, 
we obtain
\begin{equation}
\frac{\partial b_0(s)}{\partial x_{2n-1}}
=b_{-2n+1}(s+1)-b_{-2n+1}(s).
\label{DifEq:db0/dt(2n-1)}
\end{equation}

Define $L_1(x,y)$, $L_2(x,y)$ by
\begin{equation}
\begin{aligned}
L_1(x,y)&=\frac{1}{2}\left\{
L^{(\infty)}(s=0;x,y)-L^{(\infty)}(s=1;x,y)\right\},\\
L_2(x,y)&=\frac{1}{2}\left\{
L^{(\infty)}(s=0;x,y)+L^{(\infty)}(s=1;x,y)\right\},
\end{aligned}
\end{equation}
which have the following form:
\begin{equation}
\begin{aligned}
L_1(x,y)&=\sum_{n=0}^{\infty} q_n(x,y) e^{-n\partial_s},\quad
L_2(x,y)=e^{\partial_s} + \sum_{n=1}^{\infty}
r_n(x,y)e^{-n\partial_s},\\
q_n(x,y) &= \frac{b_{-n}(s=0,x,y)-b_{-n}(s=1,x,y)}{2}
\quad (n=0,1,2,\ldots), \\
r_n(x,y) &= \frac{b_{-n}(s=0,x,y)+b_{-n}(s=1,x,y)}{2}
\quad (n=1,2,3,\ldots).
\end{aligned}
\label{def:q,r}
\end{equation}
We remark that $q_n$ and $r_n$ are eigenfunctions of $e^{\partial_s}$:
\begin{equation}
e^{\partial_s}q_n = -q_n, \quad
e^{\partial_s}r_n = r_n.
\end{equation}
Applying the notation \eqref{def:q,r} to \eqref{DifEq:db0/dt(2n-1)}, 
we have
\begin{equation}
\frac{\partial q_0}{\partial x_{2n-1}}
=-2q_{2n-1}
\label{DiffEq:dq0/dx(2n-1)}
\end{equation}

Since $B_1(0)$, $B_1(1)$ are of the form
\begin{equation}
B_1(0)=e^{\partial_s}+q_0,\quad 
B_1(1)=e^{\partial_s}-q_0,
\end{equation}
it follows that
\begin{equation}
\frac{\partial L_1}{\partial x_1} =
-2L_1 e^{\partial_s} + \left[q_0,\,L_2\right],
\quad 
\frac{\partial L_2}{\partial x_1} =
\left[q_0,\,L_1\right],
\end{equation}
and hence
\begin{equation}
\begin{aligned}
\frac{\partial q_{2n-1}}{\partial x_1} &=-2 q_{2n}+2q_0r_{2n-1},&
\frac{\partial q_{2n}}{\partial x_1} &=-2 q_{2n+1},\\
\frac{\partial r_{2n-1}}{\partial x_1} &=2 q_0q_{2n-1}, &
\frac{\partial r_{2n}}{\partial x_1} &= 0.
\end{aligned}
\label{DiffEqs:dq/dx1,dr/dx1}
\end{equation}
{}From \eqref{DiffEq:dq0/dx(2n-1)} and \eqref{DiffEqs:dq/dx1,dr/dx1}, 
we have
\begin{equation}
\frac{\partial q_0}{\partial x_{2n+1}}=\left(
\frac{1}{4}\partial_{x_1}^2-q_0^2-\frac{\partial q_0}{\partial x_1}
\partial_{x_1}^{-1}\circ q_0
\right) \frac{\partial q_0}{\partial x_{2n-1}}.
\label{recursion_q}
\end{equation}
Especially for the case $n=1$, 
\begin{equation}
\frac{\partial q_0}{\partial x_3}
=\frac{1}{4}\frac{\partial^3 q_0}{\partial x_1^3}
-\frac{3}{2}q_0^2\frac{\partial q_0}{\partial x_1}.
\label{mKdVeq_461}
\end{equation}
After suitable scaling, 
the linear operator appeared in the right-hand side of 
\eqref{recursion_q} yields the recursion operator 
$\Omega$ in \eqref{def:recursion_operator_omega}, and 
the equation \eqref{mKdVeq_461} 
yields the mKdV equation \eqref{mKdV_kappa}.

We remark that another derivation of the recursion operator $\Omega$ in
terms of bilinear differential equations of Hirota-type was given in
\cite{WLS}. Here we briefly summarize the approach in \cite{WLS}. 
We use the Hirota differential operators $D_x$, $D_y$, $\ldots$, defined by
\begin{equation}
\begin{aligned}
D_x^m D_y^n f(x,y)\cdot g(x,y)
=\left.\left(\partial_x-\partial_{x'}\right)^m
\left(\partial_x-\partial_{x'}\right)^n
f(x,y)g(x',y')\right|_{x'=x,y'=y}.
\end{aligned}
\end{equation}
Setting $s'=0$, $s=1$ $y'_n=y_n$, $x'_n=x_n+a_n$ ($n=1,2,\ldots$), 
the bilinear identity \eqref{BilinearIdentity} is reduced to 
\begin{equation}
\oint\tau(0;x'-[\lambda^{-1}],y)\tau(1;x+[\lambda^{-1}],y)
e^{\xi(x'-x,\lambda)}\lambda^{-1}d\lambda
=\tau(1;x',y)\tau(0;x,y),
\end{equation}
or, using the Hirota operators 
$\tilde{D}=(D_{1},D_{2}/2,D_{3}/3,\ldots)$, 
$D_j=D_{x_j}$ ($j=1,2,\ldots$), we can write
\begin{equation}
\sum_{j=0}^{\infty}p_j(-2a)p_j(\tilde{D})
\exp\!\left(\sum_{k=1}^{\infty}a_kD_k\right)
\tau(0)\cdot\tau(1)
=\exp\!\left(\sum_{k=1}^{\infty}a_kD_k\right)
\tau(1)\cdot\tau(0)
\label{canonical_BilinearIdentity}
\end{equation}
for any $a=(a_1,a_2,\ldots)$ (cf. \cite{Loris}).
Expanding \eqref{canonical_BilinearIdentity} with respect to the 
variables $a=(a_1,a_2,\ldots)$, we obtain
\begin{equation}
\left(p_m(\tilde{D})-D_m\right)\tau(1)\cdot\tau(0)=0
\label{bilinear:pm(tilD)-Dm}
\end{equation}
from the coefficient of $a_m$, and 
\begin{equation}
\left(-2p_{m+k}(\tilde{D})+p_m(\tilde{D})D_k+p_k(\tilde{D})D_m
\right)\tau(1)\cdot\tau(0)=0
\label{bilinear:-2p(m+k)}
\end{equation}
from the coefficient of $a_m a_k$. 
Using \eqref{bilinear:pm(tilD)-Dm} to eliminate the first term 
in \eqref{bilinear:-2p(m+k)}, we have
\begin{equation}
\left(-2D_{m+k}+p_m(\tilde{D})D_k+p_k(\tilde{D})D_m
\right)\tau(1)\cdot\tau(0)=0.
\label{bilinear:-2D(m+k)}
\end{equation}

Hereafter we impose the 2-reduction condition $\partial_{x_{2n}}\tau=0$ 
($n=1,2,\ldots$). 
Setting $k=2$, the bilinear equations \eqref{bilinear:pm(tilD)-Dm}, 
\eqref{bilinear:-2D(m+k)} yield 
\begin{equation}
D_{1}^2 \tau(1)\cdot\tau(0)=0,\quad
\left(-4D_{m+2}+D_{1}^2D_{m}\right)\tau(1)\cdot\tau(0)=0.
\label{2-reduced_bilinearEqs}
\end{equation}
If we set
\begin{equation}
\psi= \log\left(\tau(1)/\tau(0)\right), \quad \phi=\log\left(\tau(0)\tau(1)\right), 
\end{equation}
it follows that
\begin{equation}
(\partial_{1}\psi)^2 + \partial_{1}^2\phi =0, \quad
-4\partial_{m+2}\psi +\partial_{1}^2\partial_{m}\psi
+2(\partial_{1}\psi)(\partial_{1}\partial_{m}\phi)=0,
\end{equation}
{}from \eqref{2-reduced_bilinearEqs}, 
where $\partial_{n}=\partial/\partial x_n$.
Setting 
\begin{equation}
q_0=\partial_1\psi=\partial_1\left(\log\frac{\tau(1)}{\tau(0)}\right), 
\end{equation}
we have the recursion relation \eqref{recursion_q}.

\section*{Acknowledgments}
S.K. acknowledges Simpei Kobayashi for bringing his attention to
the paper \cite{Z.Kobayashi}, and Ralph Willox for explaining the 
results of \cite{WLS}.
This work is partially supported by JSPS Grant-in-Aid for 
Scientific Research No. 23340037 and No. 23540252.

\end{document}